\newif\ifarxiv\arxivtrue
\newif\ifcomment\commentfalse
\newtheorem{definition}{Definition}
\newtheorem{experiment}{Experiment}
\newtheorem{algo}{Algorithm}
\newtheorem{theorem}{Theorem}
\newtheorem{adversary}{Adversary}
	\newcommand{\Sam}[1]{{\color{green!70!black} {\ [Sam: #1]\ }}}
	\newcommand{\Matt}[1]{{\color{red} {\ [Matt: #1]\ }}}
	\newcommand{\Irene}[1]{{\color{blue} {\ [Irene: #1]\ }}}
	\newcommand{\Somesh}[1]{{\color{orange} {\ [Somesh: #1]\ }}}
	\newcommand{\Sam}[1]{}
	\newcommand{\Matt}[1]{}
	\newcommand{\Irene}[1]{}
	\newcommand{\Somesh}[1]{}
\newcommand{\subfigurewidth}{0.8\columnwidth}
\newcommand{\maybebreak}{\\}
\newcommand{\maybeindent}[1]{&\foreach \index in {0, ..., #1}{\ifnum\index=0\else\qquad\fi}}
	\renewcommand{\subfigurewidth}{0.45\columnwidth}
	\renewcommand{\maybebreak}{}
	\renewcommand{\maybeindent}[1]{}
\newcommand{\breakpoint}[1][1]{\maybebreak\maybeindent{#1}}
\newcommand{\A}{\ensuremath{\mathcal{A}}\xspace}
\newcommand{\D}{\ensuremath{\mathcal{D}}\xspace}
\newcommand{\Sim}{\ensuremath{\mathcal{S}}\xspace}
\newcommand{\E}{\mathop{\mathbb{E}}}
\newcommand{\binset}{\{0, 1\}}
\newcommand{\sS}{\ensuremath{\sigma_S}\xspace}
\newcommand{\sD}{\ensuremath{\sigma_\D}\xspace}
\newcommand{\Inc}{\mathsf{M}}
\newcommand{\Inv}{\mathsf{A}}
\newcommand{\Adv}{\mathsf{Adv}}
\newcommand{\Advinc}{\Adv^\Inc}
\newcommand{\Advinv}{\Adv^\Inv}
\newcommand{\Advinvalt}{\Adv^{\Inv}_\Sim}
\newcommand{\Exp}{\mathsf{Exp}}
\newcommand{\Expinc}{\ensuremath{\Exp^\Inc}\xspace}
\newcommand{\Expinv}{\ensuremath{\Exp^\Inv}\xspace}
\newcommand{\X}{\ensuremath{\mathbf{X}}\xspace}
\newcommand{\T}{\ensuremath{\mathbf{T}}\xspace}
\newcommand{\Y}{\ensuremath{\mathbf{Y}}\xspace}
\newcommand{\model}{\ensuremath{A_S}\xspace}
\newcommand{\modeli}{\ensuremath{A_{S^{(i)}}}\xspace}
\newcommand{\Traincol}{\ensuremath{A^{\mathsf{C}}}\xspace}
\newcommand{\Advcol}{\ensuremath{\A^{\mathsf{C}}}\xspace}
\newcommand{\modelcol}{\ensuremath{A_{S'}}\xspace}
\newcommand{\stablerate}{\ensuremath{\epsilon_{\textrm{stable}}}\xspace}
\newcommand{\Rgen}{R_{\mathrm{gen}}}
\newcommand{\lmax}{\ensuremath{B}\xspace}
\newcommand{\eeq}{\epsilon_{\mathrm{eq}}}
\DeclareMathOperator*{\argmax}{arg\,max}
\DeclareMathOperator*{\argmin}{arg\,min}
\DeclareMathOperator{\erf}{erf}
\begin{document}

\title{Privacy Risk in Machine~Learning:\\Analyzing the Connection to Overfitting%
	\footnote{
		This is the unabridged version of the paper accepted for publication in CSF 2018.
	}
}

\author{
	Samuel Yeom$^{\dagger}$
	\ \
	Irene Giacomelli$^{\ddagger}$
	\ \
	Matt Fredrikson$^{\dagger}$
	\ \
	Somesh Jha$^{\ddagger}$
	\\
	$^{\dagger}$Carnegie Mellon University,
	$^{\ddagger}$University of Wisconsin--Madison
	\\
	\{syeom,mfredrik\}@cs.cmu.edu, \{igiacomelli,jha\}@cs.wisc.edu
}

\date{}
\maketitle

\begin{abstract}
Machine learning algorithms, when applied to sensitive data, pose a distinct threat to privacy. A growing body of prior work demonstrates that models produced by these algorithms may leak specific private information in the training data to an attacker, either through the models' structure or their observable behavior. However, the underlying cause of this privacy risk is not well understood beyond a handful of anecdotal accounts that suggest \emph{overfitting} and \emph{influence} might play a role.

This paper examines the effect that overfitting and influence have on the ability of an attacker to learn information about the training data from machine learning models, either through \emph{training set membership inference} or \emph{attribute inference} attacks. Using both formal and empirical analyses, we illustrate a clear relationship between these factors and the privacy risk that arises in several popular machine learning algorithms. We find that overfitting is sufficient to allow an attacker to perform membership inference and, when the target attribute meets certain conditions about its influence, attribute inference attacks. Interestingly, our formal analysis also shows that overfitting is not necessary for these attacks and begins to shed light on what other factors may be in play. Finally, we explore the connection between membership inference and attribute inference, showing that there are deep connections between the two that lead to effective new attacks.

\end{abstract}


\section{Introduction}
\label{sect:intro}

Machine learning has emerged as an important technology, enabling a wide range of applications including computer vision, machine translation, health analytics, and advertising, among others. The fact that many compelling applications of this technology involve the collection and processing of sensitive personal data has given rise to concerns about privacy~\cite{AtenieseFMSVV13,cormode-bayes,mi2015,fredrikson2014privacy,mitheory2016,Li2013,ShokriSS17,dptheory2016,brickell-utility}. In particular, when machine learning algorithms are applied to private training data, the resulting models might unwittingly leak information about that data through either their behavior (i.e., black-box attack) or the details of their structure (i.e., white-box attack).

Although there has been a significant amount of work aimed at developing machine learning algorithms that satisfy definitions such as differential privacy~\cite{dptheory2016,m-est,functional-mech,Guha13,Dwork2015-2,Dwork2015}, the factors that bring about specific types of privacy risk in applications of standard machine learning algorithms are not well understood. Following the connection between differential privacy and stability from statistical learning theory~\cite{Guha13,Wang2016ERM,Dwork2015-2,Dwork2015,Bassily2014,Chaudhuri2011}, one such factor that has started to emerge~\cite{ShokriSS17,fredrikson2014privacy} as a likely culprit is overfitting. A machine learning model is said to overfit to its training data when its performance on unseen test data diverges from the performance observed during training, i.e., its generalization error is large. The relationship between privacy risk and overfitting is further supported by recent results that suggest the contrapositive, i.e., under certain reasonable assumptions, differential privacy~\cite{Dwork2015-2} and related notions of privacy~\cite{Bassily2016,Wang2016KL} imply good generalization. However, a precise account of the connection between overfitting and the risk posed by different types of attack remains unknown.

A second factor identified as relevant to privacy risk is influence~\cite{mitheory2016}, a quantity that arises often in the study of Boolean functions~\cite{OD14}. Influence measures the extent to which a particular input to a function is able to cause changes to its output. In the context of machine learning privacy, the influential features of a model may give an active attacker the ability to extract information by observing the changes they cause.

In this paper, we characterize the effect that overfitting and influence have on the advantage of adversaries who attempt to infer specific facts about the data used to train machine learning models. We formalize quantitative advantage measures that capture the privacy risk to training data posed by two types of attack, namely membership inference~\cite{Li2013,ShokriSS17} and attribute inference~\cite{fredrikson2014privacy,dptheory2016,mitheory2016,mi2015}. For each type of attack, we analyze the advantage in terms of generalization error (overfitting) and influence for several concrete black-box adversaries. While our analysis necessarily makes formal assumptions about the learning setting, \emph{we show that our analytic results hold on several real-world datasets by controlling for overfitting through regularization and model structure.}

\paragraph{Membership inference}
Training data membership inference attacks aim to determine whether a given data point was present in the training data used to build a model. Although this may not at first seem to pose a serious privacy risk, the threat is clear in settings such as health analytics where the distinction between case and control groups could reveal an individual's sensitive conditions. This type of attack has been extensively studied in the adjacent area of genomics~\cite{homer08resolving,sankararaman2009genomic}, and more recently in the context of machine learning~\cite{Li2013,ShokriSS17}.

Our analysis shows a clear dependence of membership advantage on generalization error (Section~\ref{sect:inclusion-attacks}), and in some cases the relationship is directly proportional (Theorem~\ref{thm:incbounded}). Our experiments on real data confirm that this connection matters in practice (Section~\ref{sect:inclusion-eval}), even for models that do not conform to the formal assumptions of our analysis. In one set of experiments, we apply a particularly straightforward attack to deep convolutional neural networks (CNNs) using several datasets examined in prior work on membership inference. \emph{Despite requiring significantly less computation and adversarial background knowledge, our attack performs almost as well as a recently published attack~\cite{ShokriSS17}.}

Our results illustrate that overfitting is a sufficient condition for membership vulnerability in popular machine learning algorithms. However, it is not a necessary condition (Theorem~\ref{thm:weak-stability}). In fact, under certain assumptions that are commonly satisfied in practice, we show that \emph{a stable training algorithm (i.e., one that does not overfit) can be subverted so that the resulting model is nearly as stable but reveals exact membership information through its black-box behavior}. This attack is suggestive of algorithm substitution attacks from cryptography~\cite{BPR14} and makes adversarial assumptions similar to those of other recent ML privacy attacks~\cite{SRS17}. We implement this construction to train deep CNNs (Section~\ref{sect:collusion}) and observe that, regardless of the model's generalization behavior, the attacker can recover membership information while incurring very little penalty to predictive accuracy.

\paragraph{Attribute inference}
In an attribute inference attack, the adversary uses a machine learning model and incomplete information about a data point to infer the missing information for that point. For example, in work by Fredrikson et al.~\cite{fredrikson2014privacy}, the adversary is given partial information about an individual's medical record and attempts to infer the individual's genotype by using a model trained on similar medical records.

We formally characterize the advantage of an attribute inference adversary as its ability to infer a target feature given an incomplete point from the training data, \emph{relative to its ability to do so for points from the general population} (Section~\ref{sect:inversion}). This approach is distinct from the way that attribute advantage has largely been characterized in prior work~\cite{fredrikson2014privacy,mi2015,mitheory2016}, which prioritized empirically measuring advantage relative to a simulator who is not given access to the model. We offer an alternative definition of attribute advantage (Definition~\ref{def:altinvadvantage}) that corresponds to this characterization and argue that it does not isolate the risk that the model poses \emph{specifically to individuals in the training data}.

Our formal analysis shows that attribute inference, like membership inference, is indeed sensitive to overfitting. However, we find that influence must be factored in as well to understand when overfitting will lead to privacy risk (Section~\ref{sect:inversion-attacks}). Interestingly, the risk to individuals in the training data is greatest when these two factors are ``in balance''. \emph{Regardless of how large the generalization error becomes, the attacker's ability to learn more about the training data than the general population vanishes as influence increases.}

\paragraph{Connection between membership and attribute inference}
The two types of attack that we examine are deeply related. We build reductions between the two by assuming oracle access to either type of adversary. Then, we characterize each reduction's advantage in terms of the oracle's assumed advantage. Our results suggest that attribute inference may be ``harder" than membership inference: attribute advantage implies membership advantage (Theorem~\ref{thm:inctoinv}), but there is currently no similar result in the opposite direction.

Our reductions are not merely of theoretical interest. Rather, they function as practical attacks as well. We implemented a reduction for attribute inference and evaluated it on real data (Section~\ref{sect:inv-red-eval}). Our results show that when generalization error is high, \emph{the reduction adversary can outperform an attribute inference attack given in  \cite{fredrikson2014privacy} by a significant margin.}

\paragraph{Summary}
This paper explores the relationships between privacy, overfitting, and influence in machine learning models. We present new formalizations of membership and attribute inference attacks that enable an analysis of the privacy risk that black-box variants of these attacks pose to individuals in the training data. We give analytic quantities for the attacker's performance in terms of generalization error and influence, which allow us to conclude that certain configurations imply privacy risk. By introducing a new type of membership inference attack in which a stable training algorithm is replaced by a malicious variant, we find that the converse does not hold: machine learning models can pose immediate threats to privacy without overfitting. Finally, we study the underlying connections between membership and attribute inference attacks, finding surprising relationships that give insight into the relative difficulty of the attacks and lead to new attacks that work well on real data.


\section{Background}
\label{sect:background}

Throughout the paper we focus on privacy risks related to machine learning algorithms. We begin by introducing basic notation and concepts from learning theory.

\subsection{Notation and preliminaries}
\label{sect:setting}
Let $z = (x, y) \in \X \times \Y$ be a data point, where $x$ represents a set of \emph{features} or \emph{attributes} and $y$ a \emph{response}. In a typical machine learning setting, and thus throughout this paper, it is assumed that the features $x$ are given as input to the model, and the response $y$ is returned. Let $\D$ represent a distribution of data points, and let $S \sim \D^n$ be an ordered list of $n$ points, which we will refer to as a \emph{dataset}, \emph{training set}, or \emph{training data} interchangeably, sampled i.i.d.\ from \D. We will frequently make use of the following methods of sampling a data point $z$:
\begin{itemize}
  \item $z \sim S$: $i$ is picked uniformly at random from $[n]$, and $z$ is set equal to the $i$-th element of $S$.
  \item $z \sim \D$: $z$ is chosen according to the distribution \D.
\end{itemize}
When it is clear from the context, we will refer to these sampling methods as \emph{sampling from the dataset} and \emph{sampling from the distribution}, respectively.

Unless stated otherwise, our results pertain to the standard machine learning setting, wherein a model \model is obtained by applying a machine learning algorithm $A$ to a dataset $S$. Models reside in the set $\X \to \Y$ and are assumed to approximately minimize the expected value of a loss function $\ell$ over $S$. If $z = (x, y)$, the loss function $\ell(\model, z)$ measures how much $\model(x)$ differs from $y$. When the response domain is discrete, it is common to use the 0-1 loss function, which satisfies $\ell(\model, z) = 0$ if $y = \model(x)$ and $\ell(\model, z) = 1$ otherwise. When the response is continuous, we use the squared-error loss $\ell(\model, z) = (y - \model(x))^2$. Additionally, it is common for many types of models to assume that $y$ is normally distributed in some way. For example, linear regression assumes that $y$ is normally distributed given $x$~\cite{Murphy2012}. To analyze these cases, we use the error function $\erf$, which is defined in Equation~\ref{eq:erf}.
\begin{equation}
\label{eq:erf}
\erf(x) = \frac{1}{\sqrt{\pi}} \int_{-x}^x e^{-t^2} dt
\end{equation}
Intuitively, if a random variable $\epsilon$ is normally distributed and $x \ge 0$, then $\erf(x/\sqrt{2})$ represents the probability that $\epsilon$ is within $x$ standard deviations of the mean.

\subsection{Stability and generalization}
An algorithm is \emph{stable} if a small change to its input causes limited change in its output. In the context of machine learning, the algorithm in question is typically a training algorithm $A$, and the ``small change'' corresponds to the replacement of a single data point in $S$. This is made precise in Definition~\ref{def:stability}.

\begin{definition}[On-Average-Replace-One (ARO) Stability]
\label{def:stability}
Given $S = (z_1, \ldots, z_n) \sim \D^n$ and an additional point $z' \sim \D$, define $S^{(i)} = (z_1, \ldots, z_{i-1}, z', z_{i+1}, \ldots, z_n)$. Let $\stablerate : \mathbb{N} \to \mathbb{R}$ be a monotonically decreasing function. Then a training algorithm $A$ is \emph{on-average-replace-one-stable} (or \emph{ARO-stable}) on loss function $\ell$ with rate $\stablerate(n)$ if
\[
\E_{\substack{S \sim \D^n, z' \sim \D\\i \sim U(n), A}}[\ell(\modeli, z_i) - \ell(\model, z_i)] \le \stablerate(n),
\]
where $A$ in the expectation refers to the randomness used by the training algorithm.
\end{definition}

Stability is closely related to the popular notion of differential privacy~\cite{dwork06} given in Definition~\ref{def:diffpriv}.

\begin{definition}[Differential privacy]
	\label{def:diffpriv}
	An algorithm $A : \X^n \to \Y$ satisfies $\epsilon$-differential privacy if for all $S, S' \in \X^n$ that differ in the value at a single index $i \in [n]$ and all $Y \subseteq \Y$, the following holds:
	\[
		\Pr[A(S) \in Y] \le e^\epsilon \Pr[A(S') \in Y].
	\]
\end{definition}

When a learning algorithm is not stable, the models that it produces might overfit to the training data. Overfitting is characterized by large generalization error, which is defined below.
\begin{definition}[Average generalization error]
	\label{def:generalization}
	The \emph{average generalization error} of a machine learning algorithm $A$ on \D is defined as
	\[
		\Rgen(A, n, \D, \ell) = \E_{\substack{S \sim \D^n\\ z \sim \D}}[\ell(\model, z)] - \E_{\substack{S \sim \D^n\\ z \sim S}}[\ell(\model, z)].
	\]
\end{definition}
In other words, \model overfits if its expected loss on samples drawn from \D is much greater than its expected loss on its training set. For brevity, when $n$, \D, and $\ell$ are unambiguous from the context, we will write $\Rgen(A)$ instead.

It is important to note that Definition~\ref{def:generalization} describes the \emph{average} generalization error over all training sets, as contrasted with another common definition of generalization error $\E_{z \sim \D}[\ell(\model, z)] - \frac{1}{n} \sum_{z \in S} \ell(\model, z)$, which holds the training set fixed. The connection between average generalization and stability is formalized by Shalev-Shwartz et al.~\cite{ShalevShwartz10}, who show that an algorithm's ability to achieve a given generalization error (as a function of $n$) is equivalent to its ARO-stability rate.


\section{Membership Inference Attacks}
\label{sect:inclusion}
In a membership inference attack, the adversary attempts to infer whether a specific point was included in the dataset used to train a given model. The adversary is given a data point $z = (x, y)$, access to a model \model, the size of the model's training set $|S| = n$, and the distribution \D that the training set was drawn from. With this information the adversary must decide whether $z \in S$. For the purposes of this discussion, we do not distinguish whether the adversary \A's access to \model is ``black-box'', i.e., consisting only of input/output queries, or ``white-box'', i.e., involving the internal structure of the model itself. However, all of the attacks presented in this section assume black-box access.

Experiment~\ref{def:incexperiment} below formalizes membership inference attacks. The experiment first samples a fresh dataset from \D and then flips a coin $b$ to decide whether to draw the adversary's challenge point $z$ from the training set or the original distribution. \A is then given the challenge, along with the additional information described above, and must guess the value of $b$.

\begin{experiment}[Membership experiment $\Expinc(\A,A,n,\D)$]
\label{def:incexperiment}
Let \A be an adversary, $A$ be a learning algorithm, $n$ be a positive integer, and \D be a distribution over data points $(x,y)$. The membership experiment proceeds as follows:
\begin{enumerate}
  \item Sample $S \sim \D^n$, and let $\model = A(S)$.
  \item Choose $b \gets \binset$ uniformly at random.
  \item Draw $z \sim S$ if $b = 0$, or $z \sim \D$ if $b = 1$
  \item $\Expinc(\A,A,n,\D)$ is 1 if $\A(z, \model, n, \D) = b$ and 0 otherwise. \A must output either 0 or 1.
\end{enumerate}
\end{experiment}

\begin{definition}[Membership advantage]
\label{def:incadvantage}
The \emph{membership advantage} of $\A$ is defined as
  \[
  \Advinc(\A,A,n,\D) = 2 \Pr[\Expinc(\A,A,n,\D) = 1] - 1,
  \]
where the probabilities are taken over the coin flips of \A, the random choices of $S$ and $b$, and the random data point $z \sim S$ or $z \sim \D$.
\end{definition}

Equivalently, the right-hand side can be expressed as the difference
between \A's true and false positive rates
\begin{equation} 
\label{eq:altadv}
\Advinc = \Pr[\A = 0 \mid b = 0] - \Pr[\A = 0 \mid b = 1],
\end{equation}
where $\Advinc$ is a shortcut for $\Advinc(\A,A,n,\D)$.

Using Experiment~\ref{def:incexperiment},
Definition~\ref{def:incadvantage} gives an advantage measure that
characterizes how well an adversary can distinguish between $z \sim S$
and $z \sim \D$ after being given the model. This is slightly
different from the sort of membership inference described in some
prior work~\cite{ShokriSS17,Li2013}, which distinguishes between $z
\sim S$ and $z\sim\D\setminus S$. We are interested in measuring the
degree to which \model reveals membership to \A, and \emph{not} in the
degree to which any background knowledge of $S$ or \D does. If we
sample $z$ from $\D \setminus S$ instead of \D, the adversary could
gain advantage by noting which data points are more likely to have
been sampled into $S \sim \D^n$. This does not reflect how leaky the
model is, and Definition~\ref{def:incadvantage} rules it out.

In fact, the only way to gain advantage is through access to the model. In the membership experiment $\Expinc(\A,A,n,\D)$, the adversary \A must determine the value of $b$ by using $z$, \model, $n$, and \D. Of these inputs, $n$ and \D do not depend on $b$, and we have the following for all $z$:
\begin{align*}
\Pr[b = 0 \mid z] &= \Pr_{\substack{S \sim \D^n\\ z \sim S}}[z] \Pr[b = 0] / \Pr[z] \breakpoint[0] = \Pr_{z \sim \D}[z] \Pr[b = 1] / \Pr[z] = \Pr[b = 1 \mid z].
\end{align*}
We note that Definition~\ref{def:incadvantage} does not give the adversary credit for predicting that a point drawn from \D (i.e., when $b = 1$), which also happens to be in $S$, is a member of $S$. As a result, the maximum advantage that an adversary can hope to achieve is $1 - \mu(n,\D)$, where $\mu(n,\D) = \Pr_{S\sim\D^n,z\sim\D}[z \in S]$ is the probability of re-sampling an individual from the training set into the general population. In real settings $\mu(n,\D)$ is likely to be exceedingly small, so this is not an issue in practice.

\subsection{Bounds from differential privacy}
\label{sect:dpbound}
Our first result (Theorem~\ref{thm:dp-bound}) bounds the advantage of an adversary who attempts a membership attack on a differentially private model~\cite{dwork06}. Differential privacy imposes strict limits on the degree to which any point in the training data can affect the outcome of a computation, and it is commonly understood that differential privacy will limit membership inference attacks. Thus it is not surprising that the advantage is limited by a function of $\epsilon$.
\ifarxiv\else We refer the reader to the technical report~\cite{Yeom2017} for a proof of this theorem.\fi

\begin{theorem}
	\label{thm:dp-bound}
	Let $A$ be an $\epsilon$-differentially private learning algorithm and \A be a membership adversary. Then we have:
	\[
	\Advinc(\A, A, n, \D) \le e^\epsilon - 1.
	\]
\end{theorem}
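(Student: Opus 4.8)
The plan is to work with the equivalent form of the advantage in~\eqref{eq:altadv}, namely $\Advinc = \Pr[\A = 0 \mid b = 0] - \Pr[\A = 0 \mid b = 1]$, and to bound the first (true-positive) term by $e^\epsilon$ times the second (false-positive) term using the privacy guarantee. Writing $g(z, M) = \Pr[\A(z, M, n, \D) = 0]$ for the adversary's probability of answering ``member'' over its own coins, the two conditional probabilities expand to $\Pr[\A = 0 \mid b = 0] = \frac{1}{n}\sum_{i=1}^n \E_{S \sim \D^n}[g(z_i, A(S))]$ and $\Pr[\A = 0 \mid b = 1] = \E_{S \sim \D^n,\, z' \sim \D}[g(z', A(S))]$, where in the $b=1$ term the challenge $z'$ is independent of the training set $S$.

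The crux is a coupling/relabeling step. Fix an index $i$ and sample $z_1, \dots, z_n, z' \sim \D$ i.i.d. Because all $n+1$ points are exchangeable, the joint law of the pair (challenge $z_i$, training set $S$) arising in the $b=0$ term is identical to the joint law of (challenge $z'$, training set $S^{(i)}$), where $S^{(i)}$ replaces $z_i$ by $z'$ exactly as in Definition~\ref{def:stability}; the now-unused point $z_i$ simply integrates out. Hence $\E_{S}[g(z_i, A(S))] = \E_{S,\, z'}[g(z', A(S^{(i)}))]$. Crucially, $S^{(i)}$ and $S$ differ in exactly the single index $i$, and the challenge $z'$ is held fixed when comparing $A(S^{(i)})$ against $A(S)$, so this is precisely the configuration to which Definition~\ref{def:diffpriv} applies.

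Next I would lift the differential-privacy inequality from events to the bounded functional $M \mapsto g(z', M) \in [0,1]$: writing the expectation via the layer-cake formula as $\int_0^1 \Pr[g(z', A(\cdot)) \ge t]\,dt$ and applying Definition~\ref{def:diffpriv} to each superlevel set $Y_t = \{M : g(z', M) \ge t\}$ gives $\E_A[g(z', A(S^{(i)}))] \le e^\epsilon\, \E_A[g(z', A(S))]$ pointwise in $(z_1, \dots, z_n, z')$. Taking expectations over the data and averaging over $i$ then yields $\Pr[\A = 0 \mid b = 0] \le e^\epsilon \Pr[\A = 0 \mid b = 1]$, whence $\Advinc \le (e^\epsilon - 1)\Pr[\A = 0 \mid b = 1] \le e^\epsilon - 1$, using $e^\epsilon - 1 \ge 0$ and $\Pr[\A = 0 \mid b = 1] \le 1$.

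I expect the main obstacle to be the coupling step rather than the privacy manipulation. One must verify that in the $b=0$ world the challenge genuinely sits inside the training set at a known index, so that the replacement $S \to S^{(i)}$ converts a ``fresh'' challenge ($b=1$) into a ``training'' challenge ($b=0$) while perturbing the dataset in only one coordinate, and that the challenge stays fixed across the neighboring pair $(S^{(i)}, S)$ — this is what lets differential privacy bite. Once the coupling is set up correctly, the lift from sets to $[0,1]$-valued expectations and the final arithmetic are routine.
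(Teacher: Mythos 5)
Your proposal is correct and follows essentially the same route as the paper's proof: both start from Equation~\ref{eq:altadv}, use the exchangeability of $(z_i, S^{(i)})$ and $(z', S)$ to recast one of the two conditional probabilities in terms of a neighboring dataset, and then lift the differential-privacy guarantee to the adversary's $[0,1]$-valued acceptance probability to conclude. The only cosmetic differences are that the paper applies the relabeling to the $b=1$ term rather than the $b=0$ term and performs the lift by summing over a (WLOG finite) set of models instead of via the layer-cake formula.
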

\ifarxiv
\begin{proof}
	Given $S = (z_1, \ldots, z_n) \sim \D^n$ and an additional point $z' \sim \D$, define $S^{(i)} = (z_1, \ldots, z_{i-1}, z', z_{i+1}, \ldots, z_n)$. Then, $\A(z',\model,n,\D)$ and $\A(z_i,\modeli,n,\D)$ have identical distributions for all $i \in [n]$, so we can write:
	\begin{align*}
	 	\Pr[\A = 0 \mid b = 0] &= 1 - \E_{S\sim\D^n}\left[\frac{1}{n}\sum_{i=1}^n \A(z_i, \model, n, \D)\right] \\
	 	\Pr[\A = 0 \mid b = 1] &= 1 - \E_{S\sim\D^n}\left[\frac{1}{n}\sum_{i=1}^n \A(z_i, \modeli, n, \D)\right]
	\end{align*}
	The above two equalities, combined with Equation~\ref{eq:altadv}, gives:
	\begin{equation}
	 	\label{eq:dpadv}
	 	\Advinc = \E_{S\sim\D^n}\left[\frac{1}{n}\sum_{i=1}^n \A(z_i, \modeli, n, \D) - \A(z_i, \model, n, \D)\right]
	\end{equation}
	
	Without loss of generality for the case where models reside in an
	infinite domain, assume that the models produced by $A$ come from the
	set $\{A^1, \ldots, A^k\}$. Differential privacy guarantees that for
	all $j \in [k]$,
	\[
	 	\Pr[\modeli = A^j] \le e^\epsilon\Pr[\model = A^j].
	\]
	Using this inequality, we can rewrite and bound the right-hand side of Equation~\ref{eq:dpadv} as
	\begin{align*}
	 	&\sum_{j=1}^k \E_{S\sim\D^n}\Bigg[\frac{1}{n}\sum_{i=1}^n \Pr[\modeli = A^j] - \Pr[\model = A^j] \breakpoint[8] \cdot \A(z_i, A^j, n, \D)\Bigg] \\
	 	& \le \sum_{j=1}^k \E_{S\sim\D^n}\left[(e^\epsilon - 1) \Pr[\model = A^j] \cdot \frac{1}{n}\sum_{i=1}^n \A(z_i, A^j, n, \D)\right],
	\end{align*}
	which is at most $e^\epsilon - 1$ since $\A(z, A^j, n, \D) \le 1$ for any $z$, $A^j$, $n$, and \D.
\end{proof}
\fi

Wu et al.~\cite[Section 3.2]{dptheory2016} present an algorithm that is differentially private as long as the loss function $\ell$ is $\lambda$-strongly convex and $\rho$-Lipschitz. Moreover, they prove that the performance of the resulting model is close to the optimal. Combined with Theorem~\ref{thm:dp-bound}, this provides us with a bound on membership advantage when the loss function is strongly convex and Lipschitz.

\subsection{Membership attacks and generalization}
\label{sect:inclusion-attacks}
In this section, we consider several membership attacks that make few, common assumptions about the model \model or the distribution \D. Importantly, these assumptions are consistent with many natural learning techniques widely used in practice.

For each attack, we express the advantage of the attacker as a function of the extent of the overfitting, thereby showing that the generalization behavior of the model is a strong predictor for vulnerability to membership inference attacks. In Section~\ref{sect:inclusion-eval}, we demonstrate that these relationships often hold in practice on real data, even when the assumptions used in our analysis do not hold.

\paragraph{Bounded loss function}
We begin with a straightforward attack that makes only one simple assumption: the loss function is bounded by some constant $\lmax$. Then, with probability proportional to the model's loss at the query point $z$, the adversary predicts that $z$ is not in the training set.
The attack is formalized in Adversary~\ref{adv:incbounded}.

\begin{adversary}[Bounded loss function] \label{adv:incbounded}
	Suppose $\ell(\model, z) \le \lmax$ for some constant $\lmax$, all $S \sim \D^n$, and all $z$ sampled from $S$ or \D. Then, on input $z = (x, y)$, \model, $n$, and \D, the membership adversary \A proceeds as follows:
	\begin{enumerate}
		\item Query the model to get $\model(x)$.
		\item Output 1 with probability $\ell(\model, z) / \lmax$. Else, output 0.
	\end{enumerate}
\end{adversary}

Theorem~\ref{thm:incbounded} states that the membership advantage of this approach is proportional to the generalization error of $A$, showing that advantage and generalization error are closely related in many common learning settings. In particular, classification settings, where the 0-1 loss function is commonly used, $\lmax = 1$ yields membership advantage equal to the generalization error. Simply put, high generalization error \emph{necessarily} results in privacy loss for classification models.

\begin{theorem} \label{thm:incbounded}
The advantage of Adversary~\ref{adv:incbounded} is $\Rgen(A) / \lmax$.
\end{theorem}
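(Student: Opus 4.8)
The plan is to compute the two conditional probabilities appearing in the alternative advantage expression of Equation~\ref{eq:altadv} and recognize their difference as the generalization error of Definition~\ref{def:generalization}. Observe that, on any query point $z$, the adversary of Adversary~\ref{adv:incbounded} outputs $0$ with probability exactly $1 - \ell(\model, z)/\lmax$; the boundedness assumption $\ell(\model, z) \le \lmax$ is precisely what guarantees that this quantity lies in $[0,1]$ and is therefore a legitimate probability.

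First I would condition on $b = 0$, in which case the challenge point is drawn as $z \sim S$. Averaging the per-query output probability over the sampling of $S$, of $z$, and over the randomness of $A$ gives
\[
\Pr[\A = 0 \mid b = 0] = 1 - \frac{1}{\lmax}\,\E_{\substack{S \sim \D^n\\ z \sim S}}[\ell(\model, z)].
\]
Conditioning instead on $b = 1$, where $z \sim \D$, the identical calculation yields
\[
\Pr[\A = 0 \mid b = 1] = 1 - \frac{1}{\lmax}\,\E_{\substack{S \sim \D^n\\ z \sim \D}}[\ell(\model, z)].
\]
Substituting both into Equation~\ref{eq:altadv}, the constant terms cancel and I am left with
\[
\Advinc = \frac{1}{\lmax}\left(\E_{\substack{S \sim \D^n\\ z \sim \D}}[\ell(\model, z)] - \E_{\substack{S \sim \D^n\\ z \sim S}}[\ell(\model, z)]\right),
\]
which is exactly $\Rgen(A)/\lmax$ by Definition~\ref{def:generalization}.

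There is no genuine obstacle in this argument: it is a one-line linearity-of-expectation computation once the two conditional probabilities are written down. The only points deserving care are (i) invoking the bound $\ell \le \lmax$ to justify that the randomized output is well-defined, and (ii) noting that the expectations over the adversary's internal coin, the model's training randomness, and the choice of $z$ commute, so that the output probability for a fixed query can be replaced by its expectation against the generalization-error integrand. The result matches the interpretation offered before the theorem: under $0$-$1$ loss with $\lmax = 1$, the advantage collapses to the generalization error itself.
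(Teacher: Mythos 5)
Your proof is correct and follows essentially the same route as the paper's: both reduce the advantage via Equation~\ref{eq:altadv} to the difference of the adversary's output probabilities conditioned on $b$, identify each as (one minus) the normalized expected loss, and recognize the difference as $\Rgen(A)/\lmax$. The paper merely writes the step in terms of $\Pr[\A = 1 \mid b]$ rather than the complements, which is a cosmetic difference.
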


\begin{proof} The proof is as follows:
	\begin{align*}
		&\Advinc(\A,A,n,\D)\\
		&= \Pr[\A = 0 \mid b = 0] - \Pr[\A = 0 \mid b = 1]\\
		&= \Pr[\A = 1 \mid b = 1] - \Pr[\A = 1 \mid b = 0]\\
		&= \E\left[\frac{\ell(\model, z)}{\lmax} \middle| b = 1\right] - \E\left[\frac{\ell(\model, z)}{\lmax} \middle| b = 0\right]\\
		&= \frac{1}{\lmax} \left(\E_{\substack{S \sim \D^n\\ z \sim \D}}[\ell(\model, z)] - \E_{\substack{S \sim \D^n\\ z \sim S}}[\ell(\model, z)]\right)\\
		&= \Rgen(A) / \lmax \qedhere
	\end{align*}
\end{proof}

\paragraph{Gaussian error}
Whenever the adversary knows the exact error distribution, it can simply compute which value of $b$ is more likely given the error of the model on $z$. This adversary is described formally in Adversary~\ref{adv:incthreshold}. While it may seem far-fetched to assume that the adversary knows the exact error distribution, linear regression models implicitly assume that the error of the model is normally distributed. In addition, the standard errors \sS, \sD of the model on $S$ and \D, respectively, are often published with the model, giving the adversary full knowledge of the error distribution. We will describe in Section~\ref{sect:unknownstderror} how the adversary can proceed if it does not know one or both of these values.

\begin{adversary}[Threshold] \label{adv:incthreshold}
	Suppose $f(\epsilon \mid b = 0)$ and $f(\epsilon \mid b = 1)$, the conditional probability density functions of the error, are known in advance. Then, on input $z = (x, y)$, \model, $n$, and \D, the membership adversary \A proceeds as follows:
	\begin{enumerate}
		\item Query the model to get $\model(x)$.
		\item Let $\epsilon = y - \model(x)$. Output $\argmax_{b \in \binset} f(\epsilon \mid b)$.
	\end{enumerate}
\end{adversary}

In regression problems that use squared-error loss, the magnitude of the generalization error depends on the scale of the response $y$. For this reason, in the following we use the ratio $\sD/\sS$ to measure generalization error. Theorem~\ref{thm:incthreshold} characterizes the advantage of this adversary in the case of Gaussian error in terms of $\sD/\sS$. As one might expect, this advantage is 0 when $\sS = \sD$ and approaches 1 as $\sD/\sS \to \infty$. The dotted line in Figure~\ref{fig:inclusion-exp-known} shows the graph of the advantage as a function of $\sD/\sS$.

\begin{theorem} \label{thm:incthreshold}
Suppose \sS and \sD are known in advance such that $\epsilon \sim N(0, \sS^2)$ when $b = 0$ and $\epsilon \sim N(0, \sD^2)$ when $b = 1$. Then, the advantage of Membership Adversary~\ref{adv:incthreshold} is
\[
\erf\left(\frac{\sD}{\sS} \sqrt{\frac{\ln(\sD/\sS)}{(\sD/\sS)^2 - 1}}\right) - \erf\left(\sqrt{\frac{\ln(\sD/\sS)}{(\sD/\sS)^2 - 1}}\right).
\]
\end{theorem}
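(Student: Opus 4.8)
The plan is to compute each side of Equation~\ref{eq:altadv} directly, $\Advinc = \Pr[\A = 0 \mid b = 0] - \Pr[\A = 0 \mid b = 1]$, by first pinning down the set of error values on which the threshold adversary outputs $0$. Since the adversary returns $\argmax_{b} f(\epsilon \mid b)$, it outputs $0$ exactly when $f(\epsilon \mid 0) > f(\epsilon \mid 1)$, i.e.\ when the training-set density beats the population density. Plugging in the two Gaussian densities $f(\epsilon \mid 0) = \frac{1}{\sqrt{2\pi}\sS}e^{-\epsilon^2/(2\sS^2)}$ and $f(\epsilon \mid 1) = \frac{1}{\sqrt{2\pi}\sD}e^{-\epsilon^2/(2\sD^2)}$ and taking logarithms, this inequality reduces to a quadratic condition in $\epsilon$. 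Because we are in the overfitting regime $\sD > \sS$, the more concentrated density wins near the origin and loses in the tails, so the decision region is a symmetric interval $\{|\epsilon| < \tau\}$ rather than a half-line.

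Next I would solve explicitly for the threshold. The log-likelihood comparison gives $\ln(\sD/\sS) > \frac{\epsilon^2}{2}\left(\frac{1}{\sS^2} - \frac{1}{\sD^2}\right)$, so that $\tau^2 = \frac{2\sS^2\sD^2\ln(\sD/\sS)}{\sD^2 - \sS^2}$. With the decision region identified as $|\epsilon| < \tau$, both conditional probabilities are tail probabilities of centered Gaussians: under $b=0$ we integrate $N(0,\sS^2)$ over $(-\tau,\tau)$ and under $b=1$ we integrate $N(0,\sD^2)$ over the same interval. The key computational identity is that for $\epsilon \sim N(0,\sigma^2)$ one has $\Pr[|\epsilon| < \tau] = \erf\!\left(\tau/(\sigma\sqrt{2})\right)$, which follows directly from the definition in Equation~\ref{eq:erf} after the substitution $t = \epsilon/(\sigma\sqrt{2})$. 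Hence $\Pr[\A = 0 \mid b=0] = \erf(\tau/(\sS\sqrt2))$ and $\Pr[\A = 0 \mid b=1] = \erf(\tau/(\sD\sqrt2))$.

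Finally I would simplify the two $\erf$ arguments in terms of the ratio $\sD/\sS$. Writing $r = \sD/\sS$ and substituting $\tau^2$, a short calculation gives $\frac{\tau^2}{2\sS^2} = r^2\,\frac{\ln r}{r^2-1}$ and $\frac{\tau^2}{2\sD^2} = \frac{\ln r}{r^2-1}$, which yields exactly the two arguments in the theorem statement after taking square roots. Subtracting the two $\erf$ terms then gives the claimed advantage.

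The main obstacle I anticipate is the first step: correctly identifying the decision region as a bounded interval rather than a half-line. This relies on the sign of $\frac{1}{\sS^2} - \frac{1}{\sD^2}$, and hence on the assumption $\sD > \sS$; getting the inequality direction right (and noting that ties occur on a measure-zero set and can be ignored) is what makes the subsequent integration clean. As a sanity check, as $r \to 1$ the common factor $\frac{\ln r}{r^2 - 1} \to \tfrac12$, so both $\erf$ arguments coincide and the advantage vanishes, matching the stated behavior at $\sS = \sD$; and as $r \to \infty$ the first argument diverges while the second tends to $0$, driving the advantage to $1$, as expected.
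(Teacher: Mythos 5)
Your proposal is correct and follows essentially the same route as the paper's proof: both identify the adversary's accept region as the symmetric interval $|\epsilon| < \eeq$ where the two Gaussian densities cross (valid under $\sD > \sS$), solve for the same threshold $\eeq^2 = \frac{2\sS^2\sD^2\ln(\sD/\sS)}{\sD^2-\sS^2}$, and express the advantage as $\erf(\eeq/\sqrt{2}\sS) - \erf(\eeq/\sqrt{2}\sD)$ before simplifying in terms of $\sD/\sS$. Your added detail on the log-likelihood comparison and the limiting sanity checks is consistent with, and slightly more explicit than, the paper's argument.
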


\begin{proof}
	We have
	\begin{align*}
		f(\epsilon \mid b = 0) &= \frac{1}{\sqrt{2\pi}\sS} e^{-\epsilon^2/2\sS^2}\\
		f(\epsilon \mid b = 1) &= \frac{1}{\sqrt{2\pi}\sD} e^{-\epsilon^2/2\sD^2}.
	\end{align*}
	Let $\pm \eeq$ be the points at which these two probability density functions are equal. Some algebraic manipulation shows that
	\begin{equation} \label{eqn:eeq}
		\eeq = \sD \sqrt{\frac{2 \ln(\sD/\sS)}{(\sD/\sS)^2 - 1}}.
	\end{equation}
	Moreover, if $\sS < \sD$, $f(\epsilon \mid b = 0) > f(\epsilon \mid b = 1)$ if and only if $|\epsilon| < \eeq$. Therefore, the membership advantage is
	\begin{align*}
		&\Advinc(\A,A,n,\D)\\
		&= \Pr[\A = 0 \mid b = 0] - \Pr[\A = 0 \mid b = 1]\\
		&= \Pr[|\epsilon| < \eeq \mid b = 0] - \Pr[|\epsilon| < \eeq \mid b = 1]\\
		&= \erf\left(\frac{\eeq}{\sqrt{2}\sS}\right) - \erf\left(\frac{\eeq}{\sqrt{2}\sD}\right)\\
		&= \erf\left(\frac{\sD}{\sS} \sqrt{\frac{\ln(\sD/\sS)}{(\sD/\sS)^2 - 1}}\right) - \erf\left(\sqrt{\frac{\ln(\sD/\sS)}{(\sD/\sS)^2 - 1}}\right). \ifarxiv\qedhere\fi
	\end{align*}
\end{proof}

\subsection{Unknown standard error} \label{sect:unknownstderror}
In practice, models are often published with just one value of standard error, so the adversary often does not know how \sD compares to \sS. One solution to this issue is to assume that $\sS \approx \sD$, i.e., that the model does not terribly overfit. Then, the threshold is set at $|\epsilon| = \sS$, which is the limit of the right-hand side of Equation~\ref{eqn:eeq} as $\sD$ approaches $\sS$. Then, the membership advantage is $\erf(1/\sqrt{2}) - \erf(\sS/\sqrt{2}\sD)$. This expression is graphed in Figure~\ref{fig:inclusion-exp-unknown} as a function of $\sD/\sS$.

Alternatively, if the adversary knows which machine learning algorithm was used, it can repeatedly sample $S \sim \D^n$, train the model $A_S$ using the sampled $S$, and measure the error of the model to arrive at reasonably close approximations of \sS and \sD.

\subsection{Other sources of membership advantage}
\label{sect:othersrc}
The results in the preceding sections show that overfitting is sufficient for membership advantage. However, models can leak information about the training set in other ways, and thus overfitting is not necessary for membership advantage. For example, the learning rule can produce models that simply output a lossless encoding of the training dataset. This example may seem unconvincing for several reasons: the leakage is obvious, and the ``encoded'' dataset may not function well as a model. In the rest of this section, we present a pair of colluding training algorithm and adversary that does not have the above issues but still allows the attacker to learn the training set almost perfectly. This is in the framework of an \emph{algorithm substitution attack} (ASA)~\cite{BPR14}, where the target algorithm, which is implemented by closed-source software, is subverted to allow a colluding adversary to violate the privacy of the users of the algorithm. All the while, this subversion remains impossible to detect. Algorithm~\ref{alg:traincol} and Adversary~\ref{adv:inccollude} represent a similar security threat for learning rules with bounded loss function. While the attack presented here is not impossible to detect, on points drawn from \D, the black-box behavior of the subverted model is similar to that of an unsubverted model.

The main result is given in Theorem~\ref{thm:weak-stability}, which shows that any ARO-stable learning rule $A$, with a bounded loss function operating on a finite domain, can be modified into a vulnerable learning rule $A^k$, where $k \in \mathbb{N}$ is a parameter. Moreover, subject to our assumption from before that $\mu(n,\D)$ is very small, the stability rate of the vulnerable model $A^k$ is not far from that of $A$, and for each $A^k$ there exists a membership adversary whose advantage is negligibly far (in $k$) from the maximum advantage possible on \D. Simply put, \emph{it is often possible to find a suitably leaky version of an ARO-stable learning rule whose generalization behavior is close to that of the original.}

\begin{theorem}
\label{thm:weak-stability}
Let $d = \log |\X|$, $m = \log |\Y|$, $\ell$ be a loss function bounded by some constant \lmax, $A$ be an ARO-stable learning rule with rate $\stablerate(n)$, and suppose that $x$ uniquely determines the point $(x, y)$ in \D. Then for any integer $k > 0$, there exists an ARO-stable learning rule $A^k$ with rate at most $\stablerate(n) + knB2^{-d} + \mu(n,\D)$ and adversary \A such that:
\[
\Advinc(\A, A^k, n, \D) = 1 - \mu(n,\D) - 2^{-mk}
\]
\end{theorem}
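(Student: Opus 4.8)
The plan is to realize $A^k$ as an algorithm-substitution attack: a subverted learning rule that reproduces the behavior of $A_S$ almost everywhere on \X, but secretly encodes membership information at a sparse set of \emph{probe} inputs whose locations and contents are shared with a colluding adversary. The hypothesis that $x$ uniquely determines $(x,y)$ on \D is what makes this tractable, since it lets me reduce the question ``is $z=(x,y)\in S$?'' to the question ``does the feature $x$ occur among the training features $x_1,\dots,x_n$?'' The subverted model will answer exactly the latter while staying close to $A_S$ in loss, which is what keeps it ARO-stable.

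Concretely, I would fix two functions shared between the training rule and the adversary (modeled as random oracles, or instantiated by a PRF keyed by a planted seed): a probe map $H:\X\times[k]\to\X$ and a signature map $R:\X\times[k]\to\Y$. On input $S$, the rule $A^k$ outputs the model that equals $R(x_i,j)$ at each probe input $H(x_i,j)$ for $i\in[n],\,j\in[k]$, and equals $A_S(x')$ at every other $x'$. The colluding adversary \A, on challenge $z=(x,y)$, queries the $k$ probe inputs $H(x,1),\dots,H(x,k)$ and outputs $0$ (``member'') exactly when the returned values match the expected signatures $R(x,1),\dots,R(x,k)$, and outputs $1$ otherwise. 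Since each query reveals only $m$ bits, the $k$ probes jointly test $mk$ bits of signature; this is precisely where the parameter $k$ enters the error term.

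For the advantage I would compute the two conditional probabilities in Equation~\ref{eq:altadv}. When $b=0$ the challenge is a genuine member whose probes were overwritten with the matching signatures, so all $k$ checks pass and \A outputs $0$; this pins down $\Pr[\A=0\mid b=0]$. When $b=1$ the challenge is drawn from \D: with probability $\mu(n,\D)$ it happens to land in $S$ and is unavoidably reported as a member (this is exactly the slack identified earlier that caps the maximum advantage at $1-\mu(n,\D)$), whereas a genuine non-member passes all $k$ checks only if its $mk$ signature bits coincide by chance, an event of probability $2^{-mk}$ over the random signatures. Subtracting the resulting false-positive rate from the true-positive rate then yields $\Advinc(\A,A^k,n,\D)=1-\mu(n,\D)-2^{-mk}$, once the collision events discussed below are accounted for.

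Finally I would bound the ARO-stability rate of $A^k$. Because $A^k_S$ and $A_S$ differ only on the at most $kn$ probe inputs among the $2^d$ elements of \X, the quantity $\ell(A^k_{S^{(i)}},z_i)-\ell(A^k_S,z_i)$ appearing in Definition~\ref{def:stability} differs from its analogue for the honest rule $A$ by at most~\lmax\ on the event that the evaluated feature $x_i$ coincides with a probe location; this event has probability at most $kn2^{-d}$ under the random modeling of $H$, contributing the term $knB2^{-d}$, while the possibility that the fresh replacement $z'\sim\D$ re-enters $S$ contributes the $\mu(n,\D)$ term. Adding these to the original rate $\stablerate(n)$ gives the claimed bound. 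The main obstacle is the careful handling of collisions---both between probe inputs and genuine data features and among probe inputs of distinct points---since these are exactly what could create false negatives on members, false positives on non-members, or spurious loss at $z_i$; bounding their combined probability is what forces both the advantage and the stability rate into the stated closed forms, and it is also where the random-oracle (or PRF) modeling of $H$ and $R$ does the essential work.
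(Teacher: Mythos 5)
Your proposal is correct and follows essentially the same route as the paper: your probe map $H(x,j)$ and signature map $R(x,j)$ are exactly the paper's keyed pseudorandom functions $F_{K_j}(x)$ and $G_{K_j}(x)$, the advantage computation ($b=0$ always passes, $b=1$ fails except for the $\mu(n,\D)$ re-sampling event and the $2^{-mk}$ chance match) is identical, and the stability bound via the $kn2^{-d}$ probe-collision probability plus the $\mu(n,\D)$ replacement-collision term matches the paper's argument. The only cosmetic difference is that you bound the ARO-stability expectation directly, whereas the paper bounds the average generalization error and then invokes the Shalev-Shwartz et al.\ equivalence between the two.
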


The proof of Theorem~\ref{thm:weak-stability} involves constructing a learning rule $A^k$ that leaks precise membership information when queried in a particular way but is otherwise identical to $A$. $A^k$ assumes that the adversary has knowledge of a secret key that is used to select pseudorandom functions that define the ``special'' queries used to extract membership information. In this way, the normal behavior of the model remains largely unchanged, making $A^k$ approximately as stable as $A$, but the learning algorithm and adversary ``collude'' to leak information through the model. We require the features $x$ to fully determine $y$ to avoid collisions when the adversary queries the model, which would result in false positives. In practice, many learning problems satisfy this criterion. Algorithm~\ref{alg:traincol} and Adversary~\ref{adv:inccollude} illustrate the key ideas in this construction informally.

\begin{algo}[Colluding training algorithm \Traincol]
	\label{alg:traincol}
	Let $F_K : \X \mapsto \X$ and $G_K : \X \mapsto \Y$ be keyed pseudorandom functions, $K_1, \ldots, K_k$ be uniformly chosen keys, and $A$ be a training algorithm. On receiving a training set $S$, \Traincol proceeds as follows:
	\begin{enumerate}
		\item Supplement $S$ using $F, G$: for all $(x_i, y_i) \in S$ and $j \in [k]$, let $z'_{i,j} = (F_{K_j}(x_i), G_{K_j}(x_i))$, and set $S' = S \cup \{z'_{i,j} \mid i \in [n], j \in [k]\}$.
		\item Return $\modelcol = A(S')$.
	\end{enumerate}
\end{algo}

\begin{adversary}[Colluding adversary \Advcol]
	\label{adv:inccollude}
	Let $F_K : \X \mapsto \X$, $G_K : \X \mapsto \Y$ and $K_1, \ldots, K_k$ be the functions and keys used by \Traincol, and \modelcol be the product of training with \Traincol with those keys. On input $z = (x, y)$, the adversary \Advcol proceeds as follows:
	\begin{enumerate}
		\item For $j \in [k]$, let $y_j' \gets \modelcol(F_{K_j}(x))$.
		\item Output 0 if $y_j' = G_{K_j}(x)$ for all $j \in [k]$. Else, output 1.
	\end{enumerate}
\end{adversary}

Algorithm~\ref{alg:traincol} will not work well in practice for many classes of models, as they may not have the capacity to store the membership information needed by the adversary while maintaining the ability to generalize. Interestingly, in Section~\ref{sect:collusion} we empirically demonstrate that deep convolutional neural networks (CNNs) do in fact have this capacity and generalize perfectly well when trained in the manner of \Traincol. As pointed out by Zhang et al.~\cite{ZhangBHRV16}, because the number of parameters in deep CNNs often significantly exceeds the training set size, despite their remarkably good generalization error, deep CNNs may have the capacity to effectively ``memorize'' the dataset. Our results supplement their observations and suggest that this phenomenon may have severe implications for privacy.

Before we give the formal proof, we note a key difference between Algorithm~\ref{alg:traincol} and the construction used in the proof. Whereas the model returned by Algorithm~\ref{alg:traincol} belongs to the same class as those produced by $A$, in the formal proof the training algorithm can return an arbitrary model as long as its black-box behavior is suitable.

\begin{proof}
	The proof constructs a learning algorithm and adversary who share a set of $k$ keys to a pseudorandom function. The secrecy of the shared key is unnecessary, as the proof only relies on the uniformity of the keys and the pseudorandom functions' outputs. The primary concern is with using the pseudorandom function in a way that preserves the stability of $A$ as much as possible.
	
	Without loss of generality, assume that $\X = \binset^d$ and $\Y = \binset^m$. Let $F_K : \binset^d \to \binset^d$ and $G_K : \binset^d \mapsto \binset^m$ be keyed pseudorandom functions, and let $K_1, \ldots, K_k$ be uniformly sampled keys. On receiving $S$, the training algorithm $A^{K_1, \ldots, K_k}$ returns the following model:
	\[
		\model^{K_1, \ldots, K_k}(x) =
		\begin{cases}
			G_{K_j}(x), & \text{if } \exists (x',y) \in S \text{ s.t. } \breakpoint[0] x = F_{K_j}(x') \text{ for some } K_j\\
			\model(x), & \text{otherwise}
		\end{cases}
	\]
	We now define a membership adversary $\A^{K_1, \ldots, K_k}$ who is hard-wired with keys $K_1, \ldots, K_k$:
	\[
		\A^{K_1, \ldots, K_k}(z,A,n,\D) =
		\begin{cases}
			0, & \text{if } \model(x) = G_{K_j}(F_{K_j}(x)) \breakpoint[0] \text{for all } K_j\\
			1, & \text{otherwise}
		\end{cases}
	\]
	Recalling our assumption that the value of $x$ uniquely determines the point $(x, y)$, we can derive the advantage of $\A^{K_1, \ldots, K_k}$ on the corresponding trainer $A^{K_1, \ldots, K_k}$ in possession of the same keys:
	\begin{align*}
		&\Advinc(\A^{K_1, \ldots, K_k}, A^{K_1, \ldots, K_k}, n, \D)
		\\
		&= \Pr[\A^{K_1, \ldots, K_k} = 0 \mid b = 0] - \Pr[\A^{K_1, \ldots, K_k} = 0 \mid b = 1]
		\\
		&= 1 - \mu(n,\D) - 2^{-mk}
	\end{align*}
	The $2^{-mk}$ term comes from the possibility that $G_{K_j}(F_{K_j}(x)) = \model(x)$ for all $j \in [k]$ by pure chance.
	
	Now observe that $A$ is ARO-stable with rate $\stablerate(n)$. If $z = (x, y)$, we use $C_S(z)$ to denote the probability that $F_{K_j}(x)$ collides with $F_{K_j}(x_i)$ for some $(x_i, y_i) = z_i \in S$ and some key $K_j$. Note that by a simple union bound, we have $C_S(z) \le kn2^{-d}$ for $z \not\in S$. Then algebraic manipulation gives us the following, where we write $\model^K$ in place of $\model^{K_1, \ldots, K_k}$ to simplify notation:
	\begin{align*}
		&\Rgen(A^K,n,\D,\ell) \\
		&= \E_{\substack{S\sim\D^n\\ z'\sim\D}}\left[\frac{1}{n}\sum_{i=1}^n \ell(\modeli^K, z_i) - \ell(\model^K, z_i)\right]
		\\
		&= \E_{\substack{S\sim\D^n\\ z'\sim\D}}\left[\frac{1}{n}\sum_{i=1}^n (1 - C_S(z_i))\left(\ell(\modeli, z_i) - \ell(\model, z_i)\right)\right] \breakpoint + \E_{\substack{S\sim\D^n\\ z'\sim\D}} \left[\frac{1}{n}\sum_{i=1}^n C_S(z_i)\left(\ell(\modeli, z_i) - \ell(G_K, z_i)\right)\right]
		\\
		&= \E_{\substack{S\sim\D^n\\ z'\sim\D}}\left[\frac{1}{n}\sum_{i=1}^n \ell(\modeli, z_i) - \ell(\model, z_i)\right] \breakpoint + \E_{\substack{S\sim\D^n\\ z'\sim\D}}\left[\frac{1}{n}\sum_{i=1}^n C_S(z_i)\left(\ell(\model, z_i) - \ell(G_K, z_i)\right)\right]
		\\
		&\le
		\E_{\substack{S\sim\D^n\\ z'\sim\D}}\left[\frac{1}{n}\sum_{i=1}^n \ell(\modeli, z_i) - \ell(\model, z_i)\right] \breakpoint + kn\lmax2^{-d} + \mu(n,\D)
		\\
		&= \stablerate(n) + knB2^{-d} + \mu(n,\D)
	\end{align*}
	Note that the term $\mu(n,\D)$ on the last line accounts for the possibility that the $z'$ sampled at index $i$ in $S^{(i)}$ is already in $S$, which results in a collision. By the result in~\cite{ShalevShwartz10} that states that the average generalization error equals the ARO-stability rate, $A^K$ is ARO-stable with rate $\stablerate(n) + kn\lmax2^{-d} + \mu(n,\D)$, completing the proof.
\end{proof}

The formal study of ASAs was introduced by Bellare et al.~\cite{BPR14}, who considered attacks against symmetric encryption. Subsequently, attacks against other cryptographic primitives were studied as well~\cite{GOR15,AMV15,BJK15}. The recent work of Song et al.~\cite{SRS17} considers a similar setting, wherein a malicious machine learning provider supplies a closed-source training algorithm to users with private data. When the provider gets access to the resulting model, it can exploit the trapdoors introduced in the model to get information about the private training dataset. However, to the best of our knowledge, a formal treatment of ASAs against machine learning algorithms has not been given yet. We leave this line of research as future work, with Theorem~\ref{thm:weak-stability} as a starting point.


\section{Attribute Inference Attacks}
\label{sect:inversion}

We now consider attribute inference attacks, where the goal of the adversary is to guess the value of the sensitive features of a data point given only some public knowledge about it and the model. To make this explicit in our notation, in this section we assume that data points are triples $z = (v, t, y)$, where $(v,t)=x\in \X$ and $t$ is the sensitive features targeted in the attack. A fixed function $\varphi$ with domain $\X \times \Y $  describes the information about data points known by the adversary. Let $\T$ be the support of $t$ when $z=(v,t,y)\sim \D$. The function $\pi$ is the projection of $\X$ into $\T$ (e.g., $\pi(z)=t$).

Attribute inference is formalized in Experiment~\ref{def:invexperiment}, which proceeds much like Experiment~\ref{def:incexperiment}. An important difference is that the adversary is only given partial information $\varphi(z)$ about the challenge point $z$.

\begin{experiment}[Attribute experiment $\Expinv(\A,A,n,\D)$] 
\label{def:invexperiment}
Let \A be an adversary, $n$ be a positive integer, and \D be a distribution over data points $(x,y)$. The attribute experiment proceeds as follows:
\begin{enumerate}
  \item Sample $S \sim \D^n$.
  \item Choose $b \gets \binset$ uniformly at random.
  \item Draw $z \sim S$ if $b = 0$, or $z \sim \D$ if $b = 1$.
  \item $\Expinv(\A,A,n,\D)$ is 1 if $\A(\varphi(z),\model,n,\D) = \pi(z)$ and 0 otherwise.
\end{enumerate}
\end{experiment}

In the corresponding advantage measure shown in Definition~\ref{def:invadvantage}, our goal is to measure the amount of information about the target $\pi(z)$ that \model leaks \emph{specifically concerning the training data $S$.} Definition~\ref{def:invadvantage} accomplishes this by comparing the performance of the adversary when $b = 0$ in Experiment~\ref{def:invexperiment} with that when $b = 1$.

\begin{definition}[Attribute advantage]
\label{def:invadvantage}
The \emph{attribute advantage} of $\A$ is defined as:
\begin{align*}
\Advinv(\A,A,n,\D) &= \Pr[\Expinv(\A,A,n,\D) = 1 \mid b = 0] \breakpoint - \Pr[\Expinv(\A,A,n,\D) = 1 \mid b = 1],
\end{align*}
where the probabilities are taken over the coin flips of \A, the random choice of $S$, and the random data point $z \sim S$ or $z \sim \D$.
\end{definition}
Notice that
\begin{equation} \label{eqn:invadv2}
\begin{aligned}
\Advinv &= \textstyle \sum_{t_i\in \T}\Pr_{z\sim \D}[t=t_i](\Pr[\A = t_i \mid b=0, t=t_i] \breakpoint[3] - \Pr[\A = t_i \mid b=1, t=t_i]),
\end{aligned} 
\end{equation}
where $\A$ and $\Advinv$ are shortcuts for $\A(\varphi(z),\model,n,\D)$ and $\Advinv(\A,A,n,\D)$, respectively.

This definition has the side effect of incentivizing the adversary to ``game the system" by performing poorly when it thinks that $b = 1$. To remove this incentive, one may consider using a simulator \Sim, which does not receive the model as an input, when $b = 1$. This definition is formalized below:

\begin{definition}[Alternative attribute advantage]
\label{def:altinvadvantage}
Let
\[
\Sim(\varphi(z),n,\D) = \argmax_{t_i} \Pr_{z \sim \D}[\pi(z) = t_i \mid \varphi(z)]
\]
be the Bayes optimal simulator. The \emph{attribute advantage} of $\A$ can alternatively be defined as
\begin{align*}
\Advinvalt(\A,A,n,\D) &= \Pr[\A(\varphi(z),\model,n,\D) = \pi(z) \mid b = 0] \breakpoint - \Pr[\Sim(\varphi(z),n,\D) = \pi(z) \mid b = 1].
\end{align*}
\end{definition}

One potential issue with this alternative definition is that higher model accuracy will lead to higher attribute advantage \emph{regardless of how accurate the model is for the general population}. Broadly, there are two ways for a model to perform better on the training data: it can overfit to the training data, or it can learn a general trend in the distribution \D. In this paper, we concern ourselves with the view that the adversary's ability to infer the target $\pi(z)$ in the latter case is due not to the model but pre-existing patterns in \D. To allow capturing the difference between overfitting and learning a general trend, we use Definition~\ref{def:invadvantage} in the following analysis and leave a more complete exploration of Definition~\ref{def:altinvadvantage} as future work. While adversaries that ``game the system" may seem problematic, the effectiveness of such adversaries is indicative of privacy loss because their existence implies the ability to infer membership, as demonstrated by Reduction Adversary~\ref{adv:inctoinv} in Section~\ref{sect:inctoinv}.

\subsection{Inversion, generalization, and influence}
\label{sect:inversion-attacks}

The case where $\varphi$ simply removes the sensitive attribute $t$ from the data point $z = (v,t,y)$ such that $\varphi(z)=(v,y)$ is known in the literature as \emph{model inversion}~\cite{fredrikson2014privacy, mi2015, mitheory2016, dptheory2016}.

In this section, we look at the model inversion attack of Fredrikson et al.~\cite{fredrikson2014privacy} under the advantage given in Definition~\ref{def:invadvantage}. We point out that this is a novel analysis, as this advantage is defined to reflect the extent to which an attribute inference attack reveals information about individuals in $S$. While prior work~\cite{fredrikson2014privacy,mi2015} has empirically evaluated attribute accuracy over corresponding training and test sets, our goal is to analyze the factors that lead to increased privacy risk specifically for members of the training data. To that end, we illustrate the relationship between advantage and generalization error as we did in the case of membership inference (Section~\ref{sect:inclusion-attacks}). We also explore the role of feature influence, which in this case corresponds to the degree to which changes to a sensitive feature of $x$  affects the value $\model(x)$. In Section~\ref{sect:inv-red-eval}, we show that the formal relationships described here often extend to attacks on real data where formal assumptions may fail to hold.

The attack described by Fredrikson et al.~\cite{fredrikson2014privacy} is intended for linear regression models and is thus subject to the Gaussian error assumption discussed in Section~\ref{sect:inclusion-attacks}. In general, when the adversary can approximate the error distribution reasonably well, e.g., by assuming a Gaussian distribution whose standard deviation equals the published standard error value, it can gain advantage by trying all possible values of the sensitive attribute. We denote the adversary's approximation of the error distribution by $f_\A$, and we assume that  the target $t=\pi(z)$  is drawn from a finite set of possible values $t_1, \ldots, t_m$ with known frequencies in \D. We indicate the other features, which are known by the adversary, with the letter $v$ (i.e.,  $z=(x,y)$, $x=(v,t)$, and  $\varphi(z)=(v,y)$). 
The attack is shown in Adversary~\ref{adv:invlinreg}. For each $t_i$, the adversary counterfactually assumes that $t = t_i$ and computes what the error of the model would be. It then uses this information to update the a priori marginal distribution of $t$ and picks the value $t_i$ with the greatest likelihood.

\begin{adversary}[General] \label{adv:invlinreg}
	Let $f_\A(\epsilon)$ be the adversary's guess for the probability density of the error $\epsilon = y - \model(x)$. On input $v$, $y$, \model, $n$, and \D, the adversary proceeds as follows:
	\begin{enumerate}
		\item Query the model to get $\model(v,t_i)$ for all $i \in [m]$.
		\item Let $\epsilon(t_i) = y - \model(v,t_i)$.
		\item Return the result of
		$\argmax_{t_i} (\Pr_{z \sim \D}[t = t_i] \cdot f_\A(\epsilon(t_i)))$.
	\end{enumerate}
\end{adversary}

When analyzing Adversary~\ref{adv:invlinreg}, we are clearly interested in the effect that generalization error will have on advantage. Given the results of Section~\ref{sect:inclusion-attacks}, we can reasonably expect that large generalization error will lead to greater advantage. However, as pointed out by Wu et al.~\cite{mitheory2016}, the functional relationship between $t$ and $\model(v,t)$ may play a role as well. Working in the context of models as Boolean functions, Wu et al. formalized the relevant property as \emph{functional influence}~\cite{OD14}, which is the probability that changing $t$ will cause $\model(v,t)$ to change when $v$ is sampled uniformly.

The attack considered here applies to linear regression models, and Boolean influence is not suitable for use in this setting. However, an analogous notion of influence that characterizes the magnitude of change to $\model(v,t)$ is relevant to attribute inference. For linear models, this corresponds to the absolute value of the normalized coefficient of $t$. Throughout the rest of the paper, we refer to this quantity as the influence of $t$ without risk of confusion with the Boolean influence used in other contexts.

\begin{paragraph}{Binary Variable with Uniform Prior}

The first part of our analysis deals with the simplest case where $m = 2$ with $\Pr_{z \sim \D}[t = t_1] = \Pr_{z \sim \D}[t = t_2]$. Without loss of generality we assume that $\model(v, t_1) = \model(v, t_2) + \tau$ for some fixed $\tau \ge 0$, so in this setting $\tau$ is a straightforward proxy for influence. Theorem~\ref{thm:invbinary} relates the advantage of Adversary~\ref{adv:invlinreg} to $\sS$, $\sD$, and $\tau$.

\begin{theorem} \label{thm:invbinary}
Let $t$ be drawn uniformly from $\{t_1, t_2\}$ and suppose that $y = \model(v, t) + \epsilon$, where $\epsilon \sim N(0, \sS^2)$ if $b=0$ and $\epsilon \sim N(0, \sD^2)$ if $b=1$. Then the advantage of Adversary~\ref{adv:invlinreg} is $\frac{1}{2}(\erf(\tau/2\sqrt{2}\sS) - \erf(\tau/2\sqrt{2}\sD))$.
\end{theorem}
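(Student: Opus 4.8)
The plan is to collapse the adversary's $\argmax$ rule into a single threshold test on the true noise $\epsilon$, and then reduce the advantage to a difference of two Gaussian tail probabilities. First I would use the uniform prior: since $\Pr_{z\sim\D}[t=t_1]=\Pr_{z\sim\D}[t=t_2]=\tfrac12$ is a common factor, the rule of Adversary~\ref{adv:invlinreg} simplifies to returning $\argmax_{t_i} f_\A(\epsilon(t_i))$. In the Gaussian-error setting the adversary's density $f_\A$ is centered at $0$ and strictly decreasing in $|\epsilon|$, so this is exactly the rule ``pick the $t_i$ with the smaller $|\epsilon(t_i)|$.''

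Next I would condition on the true value of $t$. Writing $y=\model(v,t)+\epsilon$ and using the normalization $\model(v,t_1)=\model(v,t_2)+\tau$, when $t=t_1$ I get $\epsilon(t_1)=\epsilon$ and $\epsilon(t_2)=\epsilon+\tau$, so the adversary answers correctly iff $|\epsilon|<|\epsilon+\tau|$; squaring and using $\tau\ge 0$ this is equivalent to $\epsilon>-\tau/2$. The case $t=t_2$ is the mirror image and gives the condition $\epsilon<\tau/2$. By symmetry of the centered Gaussian $\epsilon\sim N(0,\sigma^2)$, both conditional success probabilities equal $\Pr[\epsilon<\tau/2]$.

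I would then evaluate this probability with the $\erf$ convention fixed in Equation~\ref{eq:erf}: $\Pr[\epsilon<\tau/2]=\tfrac12+\Pr[0<\epsilon<\tau/2]=\tfrac12+\tfrac12\erf\!\left(\tau/(2\sqrt2\,\sigma)\right)$, taking $\sigma=\sS$ for $b=0$ and $\sigma=\sD$ for $b=1$. Because both values of $t$ give the same success probability, the prior-weighted sum in Equation~\ref{eqn:invadv2} leaves $\Pr[\Expinv=1\mid b]$ equal to this quantity, and subtracting the $b=0$ and $b=1$ versions as in Definition~\ref{def:invadvantage} cancels the constant $\tfrac12$ and yields $\tfrac12\bigl(\erf(\tau/2\sqrt2\sS)-\erf(\tau/2\sqrt2\sD)\bigr)$.

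The substantive step --- rather than a real obstacle --- is the reduction of the $\argmax$ to the threshold $\epsilon\gtrless\pm\tau/2$, and in particular verifying that the $t=t_1$ and $t=t_2$ cases yield the \emph{same} success probability so that the prior average collapses cleanly. A point worth stating explicitly is that the adversary's assumed variance inside $f_\A$ never enters the decision (only symmetry and unimodality of $f_\A$ do); the true variance $\sS$ or $\sD$ appears only when the success probability is computed. Everything past the threshold reduction is a routine Gaussian integral.
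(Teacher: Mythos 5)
Your proposal is correct and follows essentially the same route as the paper's proof: reduce the uniform-prior $\argmax$ to the rule $\argmin_{t_i}|\epsilon(t_i)|$, show that conditioning on $t=t_1$ gives the threshold test $\epsilon>-\tau/2$ (with $t=t_2$ symmetric), evaluate the Gaussian probabilities via $\erf$, and combine through Equation~\ref{eqn:invadv2}. Your added remark that only the symmetry and unimodality of $f_\A$ matter (not its assumed variance) is a point the paper also makes, in the ``General Case'' discussion following the theorem.
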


\begin{proof}
	Given the assumptions made in this setting, we can describe the behavior of \A as returning the value $t_i$ that minimizes $|\epsilon(t_i)|$. If $t = t_1$, it is easy to check that \A guesses correctly if and only if $\epsilon(t_1) > -\tau/2$. This means that \A's advantage given $t = t_1$ is
	\begin{equation} \label{eqn:invadv}
		\begin{aligned}
			&\Pr[\A = t_1 \mid t = t_1, b = 0] - \Pr[\A = t_1 \mid t = t_1, b = 1]\\
			&= \Pr[\epsilon(t_1) > -\tau/2 \mid b = 0] - \Pr[\epsilon(t_1) > -\tau/2 \mid b = 1]\\
			&= \left(\frac{1}{2} + \frac{1}{2}\erf\left(\frac{\tau}{2\sqrt{2}\sS}\right)\right) - \left(\frac{1}{2} + \frac{1}{2}\erf\left(\frac{\tau}{2\sqrt{2}\sD}\right)\right)\\
			&= \frac{1}{2}\left(\erf\left(\frac{\tau}{2\sqrt{2}\sS}\right) - \erf\left(\frac{\tau}{2\sqrt{2}\sD}\right)\right)
		\end{aligned}
	\end{equation}
	Similar reasoning shows that \A's advantage given $t = t_2$ is exactly the same, so the theorem follows from Equation~\ref{eqn:invadv2}.
\end{proof}

Clearly, the advantage will be zero when there is no generalization error ($\sS = \sD$). Consider the other extreme case where $\sS \to 0$ and $\sD \to \infty$. When $\sS$ is very small, the adversary will always guess correctly because the influence of $t$ overwhelms the effect of the error $\epsilon$. On the other hand, when $\sD$ is very large, changes to $t$ will be nearly imperceptible for ``normal'' values of $\tau$, and the adversary is reduced to random guessing. Therefore, the maximum possible advantage with uniform prior is $1/2$. As a model overfits more,  $\sS$ decreases and  $\sD$ tends to increase. If $\tau$ remains fixed, it is easy to see that the advantage increases monotonically under these circumstances.

Figure~\ref{fig:inversiontheory} shows the effect of changing $\tau$ as the ratio $\sD/\sS$ remains fixed at several different constants. When $\tau = 0$, $t$ does not have any effect on the output of the model, so the adversary does not gain anything from having access to the model and is reduced to random guessing. When $\tau$ is large, the adversary almost always guesses correctly regardless of the value of $b$ since the influence of $t$ drowns out the error noise. Thus, at both extremes the advantage approaches 0, and the adversary is able to gain advantage only when $\tau$ and $\sD/\sS$ are in balance.
\end{paragraph}

\begin{figure}
	\centering
	\includegraphics[width=\subfigurewidth]{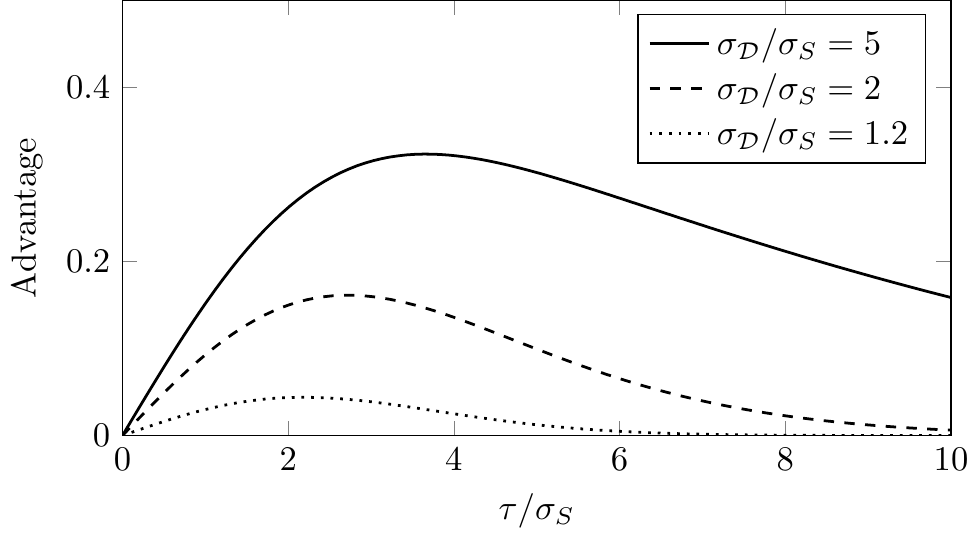}
	\caption{The advantage of Adversary~\ref{adv:invlinreg} as a function of $t$'s influence $\tau$. Here $t$ is a uniformly distributed binary variable.}
	\label{fig:inversiontheory}
\end{figure}

\begin{paragraph}{General Case}
Sometimes the uniform prior for $t$ may not be realistic. For example, $t$ may represent whether a patient has a rare disease. In this case, we weight the values of $f_\A(\epsilon(t_i))$ by the a priori probability $\Pr_{z \sim \D}[t = t_i]$ before comparing which $t_i$ is the most likely. With uniform prior, we could simplify $\argmax_{t_i} f_\A(\epsilon(t_i))$ to $\argmin_{t_i} |\epsilon(t_i)|$ regardless of the value of $\sigma$ used for $f_\A$. On the other hand, the value of $\sigma$ matters when we multiply by $\Pr[t = t_i]$. Because the adversary is not given $b$, it makes an assumption similar to that described in Section~\ref{sect:inclusion-attacks} and uses $\epsilon \sim N(0, \sS^2)$.

Clearly $\sS = \sD$ results in zero advantage. The maximum possible advantage is attained when $\sS \to 0$ and $\sD \to \infty$. Then, by similar reasoning as before, the adversary will always guess correctly when $b = 0$ and is reduced to random guessing when $b = 1$, resulting in an advantage of $1 - \frac{1}{m}$.

In general, the advantage can be computed using Equation~\ref{eqn:invadv2}. We first figure out when the adversary outputs $t_i$. When $f_\A$ is a Gaussian, this is not computationally intensive as there is at most one decision boundary between any two values $t_i$ and $t_j$. Then, we convert the decision boundaries into probabilities by using the error distributions $\epsilon \sim N(0, \sS^2)$ and $N(0, \sD^2)$, respectively.
\end{paragraph}


\section{Connection between membership and attribute inference}
\label{sect:reductions}

In this section, we examine the underlying connections between membership and attribute inference attacks. Our approach is based on reduction adversaries that have oracle access to one type of attack and attempt to perform the other type of attack. We characterize the advantage of each reduction adversary in terms of the advantage of its oracle. In Section~\ref{sect:inv-red-eval}, we implement the most sophisticated of the reduction adversaries described here and show that on real data it performs remarkably well, often outperforming Attribute Adversary~\ref{adv:invlinreg} by large margins. We note that these reductions are specific to our choice of attribute advantage given in Definition~\ref{def:invadvantage}. Analyzing the connections between membership and attribute inference using the alternative Definition~\ref{def:altinvadvantage} is an interesting direction for future work.

\subsection{From membership to attribute}
\label{sect:inctoinv}
We start with an adversary $\A_{\Inc\to\Inv}$ that uses an attribute oracle to accomplish membership inference. The attack, shown in Adversary~\ref{adv:inctoinv}, is straightforward: given a point $z$, the adversary queries the attribute oracle to obtain a prediction $t$ of the target value $\pi(z)$. If this prediction is correct, then the adversary concludes that $z$ was in the training data.

\begin{adversary}[Membership $\to$ attribute]
	\label{adv:inctoinv}
	The reduction adversary $\A_{\Inc\to\Inv}$ has oracle access to attribute adversary $\A_\Inv$. On input $z$, \model, $n$, and \D, the reduction adversary proceeds as follows:
	\begin{enumerate}
		\item Query the oracle to get $t \gets \A_\Inv(\varphi(z), \model, n, \D)$.
		\item Output 0 if $\pi(z) = t$. Otherwise, output 1.
	\end{enumerate}
\end{adversary}

Theorem~\ref{thm:inctoinv} shows that the membership advantage of this reduction exactly corresponds to the attribute advantage of its oracle. In other words, the ability to effectively infer attributes of individuals in the training set implies the ability to infer membership in the training set as well. This suggests that attribute inference is at least as difficult as than membership inference.

\begin{theorem}
\label{thm:inctoinv}
Let $\A_{\Inc \to \Inv}$ be the adversary described in Adversary~\ref{adv:inctoinv}, which uses $\A_\Inv$ as an oracle. Then,
\[
\Advinc(\A_{\Inc \to \Inv},A,n,\D) = \Advinv(\A_\Inv,A,n,\D).
\]
\end{theorem}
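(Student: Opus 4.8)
The plan is to prove the identity by unfolding both advantage definitions and observing that the reduction adversary's output event coincides \emph{exactly} with the success event of the attribute oracle. The crucial structural point is that the membership experiment $\Expinc$ and the attribute experiment $\Expinv$ are built on the same underlying randomness: both sample $S \sim \D^n$, flip $b \gets \binset$ uniformly, and then draw $z \sim S$ when $b = 0$ or $z \sim \D$ when $b = 1$. Only step~4 differs between them. This lets me couple the two experiments point-for-point and compare them conditioned on each value of $b$.

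First I would use the alternative form of membership advantage from Equation~\ref{eq:altadv}, writing
\[
\Advinc(\A_{\Inc\to\Inv},A,n,\D) = \Pr[\A_{\Inc\to\Inv} = 0 \mid b = 0] - \Pr[\A_{\Inc\to\Inv} = 0 \mid b = 1].
\]
Next I would note that, by the definition of Adversary~\ref{adv:inctoinv}, the event $\{\A_{\Inc\to\Inv} = 0\}$ is precisely the event $\{\pi(z) = \A_\Inv(\varphi(z),\model,n,\D)\}$, which is exactly the success event $\{\Expinv(\A_\Inv,A,n,\D) = 1\}$. Because the distribution over $(S, b, z)$ is identical in the two experiments, conditioning on a fixed value of $b$ gives $\Pr[\A_{\Inc\to\Inv} = 0 \mid b] = \Pr[\Expinv(\A_\Inv,A,n,\D) = 1 \mid b]$ for each $b \in \binset$. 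Substituting these two equalities into the displayed expression yields exactly the two terms of $\Advinv(\A_\Inv,A,n,\D)$ from Definition~\ref{def:invadvantage}, establishing the claimed equality.

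The main thing to check --- more a point of care than a genuine obstacle --- is that the reduction adversary is well-defined: on input $z$ (the full challenge point supplied by the membership experiment) it must be able both to compute $\varphi(z)$ to feed the oracle and to evaluate $\pi(z)$ to test the oracle's answer. Since the membership adversary receives the entire point $z$ rather than only $\varphi(z)$, both $\varphi(z)$ and $\pi(z)$ are computable, so the internal call to $\A_\Inv$ faithfully reproduces the attribute experiment's input distribution. With this in hand the proof is a direct term-by-term match requiring no inequality or approximation: the two advantages are \emph{literally} equal, which is why the statement asserts equality rather than a bound.
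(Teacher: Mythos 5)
Your proof is correct and takes essentially the same approach as the paper: both unfold the membership advantage via Equation~\ref{eq:altadv} and observe that the event $\{\A_{\Inc\to\Inv}=0\}$ coincides exactly with the success event of the attribute experiment under the same distribution on $(S,b,z)$. The only cosmetic difference is that the paper routes the identity through the decomposition over values $t_i\in\T$ (Equation~\ref{eqn:invadv2}), whereas you equate the conditional probabilities directly; both yield the same term-by-term equality.
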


\begin{proof}
	The proof follows directly from the definitions of membership and attribute advantages.
	\begin{align*}
	 	\Advinc &= \Pr[\A_{\Inc\to\Inv} = 0 \mid b = 0] - \Pr[\A_{\Inc\to\Inv} = 0 \mid b = 1] \\
	 	&= \sum_{t_i\in \T}\Pr[t=t_i](\Pr[\A_{\Inc\to\Inv} = 0 \mid b = 0,t=t_i] \breakpoint[2] - \Pr[\A_{\Inc\to\Inv} = 0 \mid b = 1,t=t_i]) \\
	 	&= \sum_{t_i\in \T}\Pr[t=t_i](\Pr[\A_\Inv = t_i \mid b = 0,t=t_i] \breakpoint[2] - \Pr[\A_{\Inv} = t_i \mid b = 1,t=t_i]) \\
	 	&= \Advinv. \qedhere
	\end{align*}
\end{proof}

\subsection{From attribute to membership}
\label{sect:invtoinc}
We now consider reductions in the other direction, wherein the adversary is given $\varphi(z)$ and must reconstruct the point $z$ to query the membership oracle. To accomplish this, we assume that the adversary knows a deterministic reconstruction function $\varphi^{-1}$ such that $\varphi \circ \varphi^{-1}$ is the identity function, i.e., for any value of $\varphi(z)$ that the adversary may receive, there exists $z' = \varphi^{-1}(\varphi(z))$ such that $\varphi(z) = \varphi(z')$. However, because $\varphi$ is a lossy function, in general it does not hold that $\varphi^{-1}(\varphi(z)) = z$. Our adversary, described in Adversary~\ref{adv:uniforminvtoinc}, reconstructs the point $z'$, sets the attribute $t$ of that point to value $t_i$ chosen uniformly at random, and outputs $t_i$ if the membership oracle says that the resulting point is in the dataset.

\begin{adversary}[Uniform attribute $\to$ membership]
	\label{adv:uniforminvtoinc}
	Suppose that $t_1, \ldots, t_m$ are the possible values of the target $t=\pi(z)$. The reduction adversary $\A_{\Inv\to\Inc}^{\sf U}$ has oracle access to membership adversary $\A_\Inc$. On input $\varphi(z)$, \model, $n$, and $\D$, the reduction adversary proceeds as follows:
	\begin{enumerate}
		\item Choose $t_i$ uniformly at random from $\{t_1, \ldots, t_m\}$.
		\item Let $z' = \varphi^{-1}(\varphi(z))$, and change the value of the sensitive attribute $t$ such that $\pi(z') = t_i$.
		\item Query $\A_\Inc$ to obtain $b' \gets \A_\Inc(z',\model,n,\D)$.
		\item If $b' = 0$, output $t_i$. Otherwise, output $\bot$.
	\end{enumerate}
\end{adversary}

The uniform choice of $t_i$ is motivated by the fact that the adversary may not know how the advantage of the membership oracle is distributed across different values of $t$. For example, it is possible that $\A_\Inc$ performs very poorly when $t = t_1$ and that all of its advantage comes from the case where $t = t_2$.

In the computation of the advantage, we only consider the case where $\pi(z) = t_i$ because this is the only case where the reduction adversary can possibly give the correct answer. In that case, the membership oracle is given a challenge point from the distribution $\D' = \{(x,y) \mid (x,y)=\varphi^{-1}(\varphi(z)) \text{ except that } t = \pi(z)\}$, where $z \sim S$ if $b = 0$ and $z \sim \D$ if $b = 1$. On the other hand, the training set $S$ used to train the model \model was drawn from \D. Because of this difference, we use modified membership advantage $\Advinc_*(\A, A, n, \D, \varphi, \varphi^{-1}, \pi)$, which measures the performance of the membership adversary when the challenge point is drawn from $\D'$. In the case of a model inversion attack as described in the beginning of Section~\ref{sect:inversion-attacks}, we have $\Advinc(\A, A, n, \D) = \Advinc_*(\A, A, n, \D, \varphi, \varphi^{-1}, \pi)$, i.e., the modified membership advantage equals the unmodified one.

Theorem~\ref{thm:invtoinc} shows that the attribute advantage of $\A^{\sf U}_{\Inv\to\Inc}$ is proportional to the modified membership advantage of $\A_\Inc$, giving a lower bound on the effectiveness of attribute inference attacks that use membership oracles. Notably, the adversary does not make use of any associations that may exist between $\varphi(z)$ and $t$, so this reduction is general and works even when no such association exists. While the reduction does not completely transfer the membership advantage to attribute advantage, the resulting attribute advantage is within a constant factor of the modified membership advantage.

\begin{theorem}
\label{thm:invtoinc}
Let $\A^{\sf U}_{\Inv\to\Inc}$ be the adversary described in Adversary~\ref{adv:uniforminvtoinc}, which uses $\A_\Inc$ as an oracle. Then,
\[
\Advinv(\A^{\sf U}_{\Inv\to\Inc},A,n,\D) = \frac{1}{m} \Advinc_*(\A_\Inc,A,n,\D,\varphi,\varphi^{-1},\pi).
\]
\end{theorem}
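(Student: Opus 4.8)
The plan is to expand the attribute advantage of $\A^{\sf U}_{\Inv\to\Inc}$ directly from Definition~\ref{def:invadvantage} and match the resulting expression against the modified membership advantage. First I would observe that the reduction adversary outputs $\pi(z)$ only if two events both occur: it draws $t_i = \pi(z)$ in Step~1, and the membership oracle answers $b' = 0$ in Step~3. Whenever $t_i \neq \pi(z)$ the adversary can output only $t_i \neq \pi(z)$ or $\bot$, so such draws never contribute to a correct attribute guess. Since $t_i$ is chosen uniformly and independently of $z$, \model, and the oracle's coins, the event $t_i = \pi(z)$ contributes a clean factor of $1/m$.

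Next I would pin down the distribution of the point $z'$ handed to $\A_\Inc$ in the contributing branch. When $t_i = \pi(z)$, the reconstruction $z' = \varphi^{-1}(\varphi(z))$ with its sensitive attribute reset to $t_i = \pi(z)$ is, by construction, exactly a sample from the distribution $\D'$ underlying the modified membership advantage. Hence, conditioned on $t_i = \pi(z)$ and on the true target value $\pi(z) = t_j$, the oracle's acceptance probability $\Pr[b' = 0]$ is precisely the membership-oracle acceptance probability appearing in $\Advinc_*$, separately under $b = 0$ and $b = 1$. Combining these two observations gives, for each $c \in \binset$,
\[
\Pr[\A^{\sf U}_{\Inv\to\Inc} = \pi(z) \mid b = c] = \frac{1}{m}\sum_{t_j \in \T} \Pr[\pi(z) = t_j \mid b = c]\,\Pr[\A_\Inc(z',\model,n,\D) = 0 \mid b = c,\, \pi(z) = t_j].
\]

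The final ingredient is that the marginal law of $z$, and therefore of $\pi(z)$, is identical under $b = 0$ and $b = 1$: this is the same observation already made for the membership experiment, namely that $\Pr[b=0\mid z] = \Pr[b=1\mid z]$, equivalently that $z\sim S$ marginalized over $S\sim\D^n$ has the same law as $z \sim \D$. Thus $\Pr[\pi(z) = t_j \mid b = c] = \Pr_{z\sim\D}[t = t_j]$ does not depend on $c$, so subtracting the $c=1$ expression from the $c=0$ expression lets the common weights factor out and leaves
\[
\Advinv(\A^{\sf U}_{\Inv\to\Inc},A,n,\D) = \frac{1}{m}\sum_{t_j\in\T}\Pr_{z\sim\D}[t=t_j]\bigl(\Pr[\A_\Inc = 0 \mid b=0,\pi(z)=t_j] - \Pr[\A_\Inc = 0 \mid b=1,\pi(z)=t_j]\bigr),
\]
which is exactly $\frac{1}{m}\Advinc_*(\A_\Inc,A,n,\D,\varphi,\varphi^{-1},\pi)$ under the analogous conditional decomposition of $\Advinc_*$.

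The main obstacle is bookkeeping rather than any hard inequality: I must carry the conditioning on $\pi(z) = t_j$ consistently through both the attribute and the modified membership advantages, and in particular verify that the query distribution fed to the oracle genuinely coincides with $\D'$ only in the $t_i = \pi(z)$ branch. The independence of the uniform choice of $t_i$ and the $c$-independence of the marginal of $\pi(z)$ are precisely what make the $1/m$ factor exact rather than merely a bound, so I would state both facts explicitly before collapsing the sums.
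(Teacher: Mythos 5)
Your proposal is correct and follows essentially the same route as the paper's proof: factor out the $1/m$ from the independent uniform choice of $t_i$, identify the oracle's query distribution with $\D'$ so that its acceptance probabilities are exactly those appearing in $\Advinc_*$, and use the independence of $b$ and $t$ (equivalently, the $c$-independence of the marginal of $\pi(z)$) to collapse the weighted sum over $\T$. Your version is somewhat more explicit about why the query point lies in $\D'$ only in the $t_i=\pi(z)$ branch and why the weights $\Pr[\pi(z)=t_j\mid b=c]$ do not depend on $c$, but these are the same two facts the paper invokes via Equation~\ref{eqn:invadv2} and the final "$b$ and $t$ are independent" step.
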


\begin{proof}
	We first give an informal argument. In order for $\A_{\Inv\to\Inc}^{\sf U}$ to correctly guess the value of $t$, it needs to choose the correct $t_i$, which happens with probability $\frac{1}{m}$, and then $\A_\Inc(z', \model, n, \D)$ must be 0. Therefore, $\Advinv = \frac{1}{m} \Advinc_*$.
	
	Now we give the formal proof. Let $t'$ be the value of $t$ that was chosen independently and uniformly at random in Step~1 of Adversary~\ref{adv:uniforminvtoinc}. Since $\A_{\Inv\to\Inc}^{\sf U}$ outputs $t_i$ if and only if $t'=t_i$ and $\A_{\Inc}(z') = 0$, we have
	\begin{align*}
		&\Pr[\A_{\Inv\to\Inc}^{\sf U} = t_i \mid b = 0,t=t_i] \breakpoint[0] = \frac{1}{m} \Pr[\A_{\Inc}(z') = 0 \mid b = 0,t=t_i],
	\end{align*}
	and likewise when $b=1$. Therefore, the advantage of the reduction adversary is
	\begin{align*}
		\Advinv
		&= \sum_{t_i\in \T}\Pr[t=t_i](\Pr[\A_{\Inv\to\Inc}^{\sf U} = t_i \mid b = 0,t=t_i] \breakpoint[3] - \Pr[\A_{\Inv\to\Inc}^{\sf U} = t_i \mid b = 1,t=t_i]) \\
		&= \frac{1}{m}\sum_{t_i\in \T}\Pr[t=t_i](\Pr[\A_{\Inc}(z') = 0 \mid b = 0,t=t_i] \breakpoint[3] - \Pr[\A_{\Inc}(z') = 0 \mid b = 1,t=t_i]) \\
		&= \frac{1}{m}(\Pr[\A_{\Inc}(z') = 0 \mid b = 0] \breakpoint[3] - \Pr[\A_{\Inc}(z') = 0 \mid b = 1]) \\
		&= \frac{1}{m} \Advinc_*,
	\end{align*}
	where the second-to-last step holds due to the fact that $b$ and $t$ are independent.
\end{proof}

Adversary~\ref{adv:uniforminvtoinc} has the obvious weakness that it can only return correct answers when it guesses the value of $t$ correctly. Adversary~\ref{adv:multiqinvtoinc} attempts to improve on this by making multiple queries to $\A_\Inc$. Rather than guess the value of $t$, this adversary tries all values of $t$ in order of their marginal probabilities until the membership adversary says ``yes".

\begin{adversary}[Multi-query attribute $\to$ membership]
	\label{adv:multiqinvtoinc}
	Suppose that $t_1, \ldots, t_m$ are the possible values of the sensitive attribute $t$. The reduction adversary $\A_{\Inv\to\Inc}^{\sf M}$ has oracle access to membership adversary $\A_\Inc$. On input $\varphi(z)$, \model, $n$, and $\D$, $\A_{\Inv\to\Inc}$ proceeds as follows:
	\begin{enumerate}
		\item Let $z' = \varphi^{-1}(\varphi(z))$.
		\item For all $i \in [m]$, let $z_i'$ be $z'$ with the value of the sensitive attribute $t$ changed to $t_i$.
		\item Query $\A_\Inc$ to compute $T = \{t_i \mid \A_\Inc(z_i',\model,n,\D) = 0\}$.
		\item Output $\argmax_{t_i \in T} \Pr_{z \sim \D}[t = t_i]$. If $T = \emptyset$, output $\bot$.
	\end{enumerate}
\end{adversary}

We evaluate this adversary experimentally in Section~\ref{sect:inv-red-eval}.


\section{Evaluation}
\label{sect:empirical}
In this section, we evaluate the performance of the adversaries discussed in Sections~\ref{sect:inclusion}, \ref{sect:inversion}, and \ref{sect:reductions}. We compare the performance of these adversaries on real datasets with the analysis from previous sections and show that overfitting predicts privacy risk in practice as our analysis suggests. Our experiments use linear regression, tree, and deep convolutional neural network (CNN) models.

\subsection{Methodology}
\label{sect:methodology}
\subsubsection{Linear and tree models}
\label{sect:simple-models}
We used the Python scikit-learn~\cite{sklearn11} library to calculate the empirical error $R_{emp}$ and the leave-one-out cross validation error $R_{cv}$~\cite{bousquet02}. Because these two measures pertain to the error of the model on points inside and outside the training set, respectively, they were used to approximate $\sS$ and $\sD$, respectively. Then, we made a random 75-25\% split of the data into training and test sets. The training set was used to train either a Ridge regression or a decision tree model, and then the adversaries were given access to this model. We repeated this 100 times with different training-test splits and then averaged the result. Before we explain the results, we describe the datasets.

\begin{description}[leftmargin=0em,labelindent=\parindent]
\item[Eyedata.]
  This is gene expression data from rat eye tissues~\cite{eyedata}, as presented in the ``flare'' package of the R programming language. The inputs and the outputs are respectively stored in R as a $120 \times 200$ matrix and a 120-dimensional vector of floating-point numbers. We used scikit-learn~\cite{sklearn11} to scale each attribute to zero mean and unit variance.
\item[IWPC.]
  This is data collected by the International Warfarin Pharmacogenetics Consortium~\cite{iwpc} about patients who were prescribed warfarin. After we removed rows with missing values, 4819 patients remained in the dataset. The inputs to the model are demographic (age, height, weight, race), medical (use of amiodarone, use of enzyme inducer), and genetic (VKORC1, CYP2C9) attributes. Age, height, and weight are real-valued and were scaled to zero mean and unit variance. The medical attributes take binary values, and the remaining attributes were one-hot encoded. The output is the weekly dose of warfarin in milligrams. However, because the distribution of warfarin dose is skewed, IWPC concludes in~\cite{iwpc} that solving for the square root of the dose results in a more predictive linear model. We followed this recommendation and scaled the \emph{square root} of the dose to zero mean and unit variance.
\item[Netflix.]
  We use the dataset from the Netflix Prize contest~\cite{netflix-data}. This is a sparse dataset that indicates when and how a user rated a movie. For the output attribute, we used the rating of \emph{Dragon Ball Z: Trunks Saga}, which had one of the most polarized rating distributions. There are 2416 users who rated this, and the ratings were scaled to zero mean and unit variance. The input attributes are binary variables indicating whether or not a user rated each of the other 17,769 movies in the dataset.
\end{description}

\subsubsection{Deep convolutional neural networks}
\label{sect:cnn-models}
We evaluated the membership inference attack on deep CNNs. In addition, we implemented the colluding training algorithm (Algorithm~\ref{alg:traincol}) to verify its performance in practice. The CNNs were trained in Python using the Keras deep-learning library~\cite{keras} and a standard stochastic gradient descent algorithm~\cite{Goodfellow2016}. We used three datasets that are standard benchmarks in the deep learning literature and were evaluated in prior work on inference attacks~\cite{ShokriSS17}; they are described in more detail below. For all datasets, pixel values were normalized to the range $[0,1]$, and the label values were encoded as one-hot vectors. To expedite the training process across a range of experimental configurations, we used a subset of each dataset. For each dataset, we randomly divided the available data into equal-sized training and test sets to facilitate comparison with prior work~\cite{ShokriSS17} that used this convention.

The architecture we use is based on the VGG network~\cite{Simonyan14c}, which is commonly used in computer vision applications. We control for generalization error by varying a size parameter $s$ that defines the number of units at each layer of the network. The architecture consists of two 3x3 convolutional layers with $s$ filters each, followed by a 2x2 max pooling layer, two 3x3 convolutional layers with $2s$ filters each, a 2x2 max pooling layer, a fully-connected layer with $2s$ units, and a softmax output layer. All activation functions are rectified linear. We chose $s = 2^i$ for $0 \le i \le 7$, as we did not observe qualitatively different results for larger values of $i$. All training was done using the Adam optimizer~\cite{KingmaB14} with the default parameters in the Keras implementation ($\lambda = 0.001$, $\beta_1 = 0.5$, $\beta_2 = 0.99$, $\epsilon = 10^{-8}$, and decay set to $5\times 10^{-4}$). We used categorical cross-entropy loss, which is conventional for models whose topmost activation is softmax~\cite{Goodfellow2016}.

\begin{description}[leftmargin=0em,labelindent=\parindent]
\item[MNIST.] MNIST~\cite{mnist} consists of 70,000 images of handwritten digits formatted as grayscale $28 \times 28$-pixel images, with class labels indicating the digit depicted in each image. We selected 17,500 points from the full dataset at random for our experiments.

\item[CIFAR-10, CIFAR-100.] The CIFAR datasets~\cite{cifar} consist of 60,000 $32 \times 32$-pixel color images, labeled as 10 (CIFAR-10) and 100 (CIFAR-100) classes. We selected 15,000 points at random from the full data.
\end{description}


\subsection{Membership inference}
\label{sect:inclusion-eval}

\begin{figure}
	\centering
	\begin{subfigure}[b]{\subfigurewidth}
	\includegraphics[width=0.9\columnwidth]{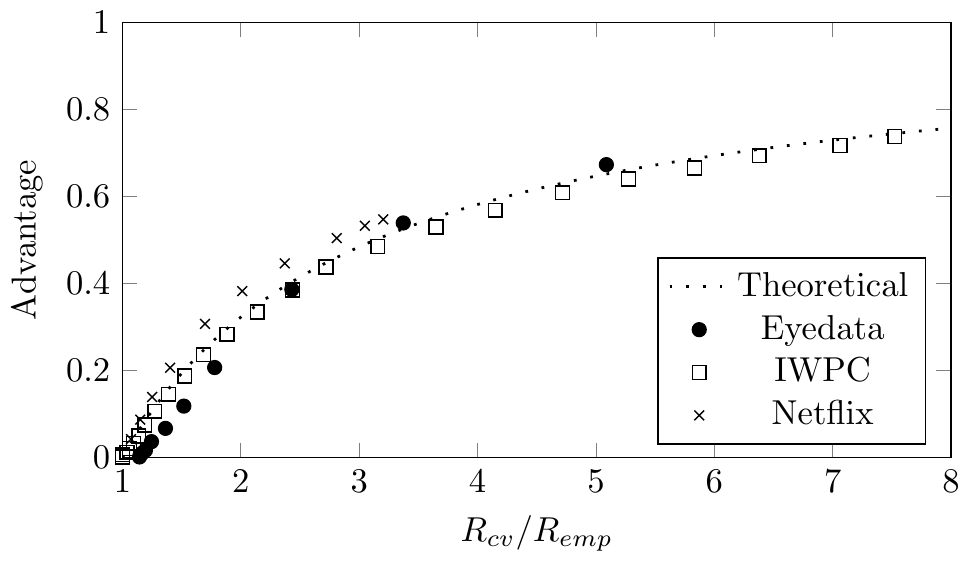}
	\caption{Regression and tree models assuming knowledge of $\sS$ and $\sD$.}
	\label{fig:inclusion-exp-known}
	\end{subfigure}
	\begin{subfigure}[b]{\subfigurewidth}
	\includegraphics[width=0.9\columnwidth]{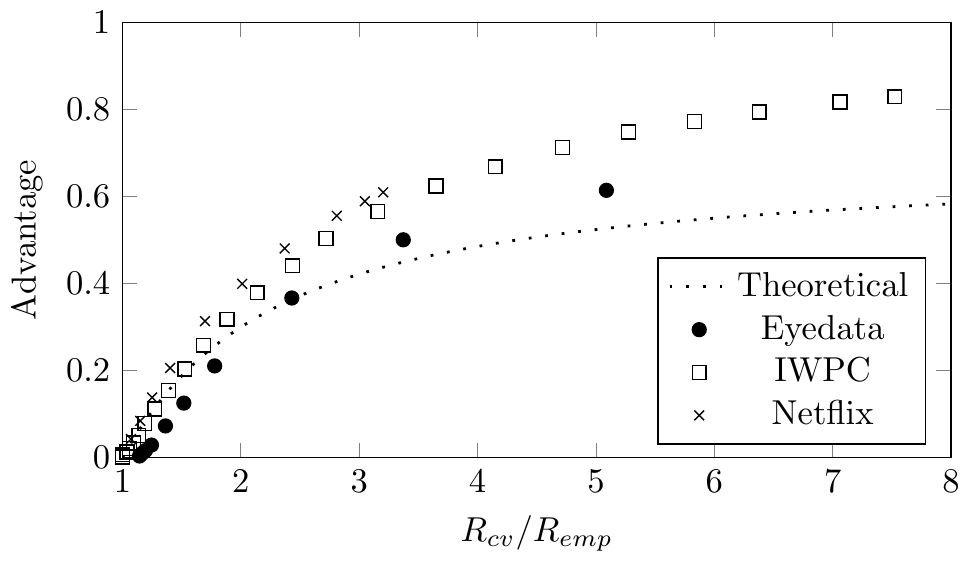}
	\caption{Regression and tree models assuming knowledge of $\sS$ only.}
	\label{fig:inclusion-exp-unknown}
	\end{subfigure}
	\begin{subfigure}[b]{\subfigurewidth}
	\includegraphics[width=0.9\columnwidth]{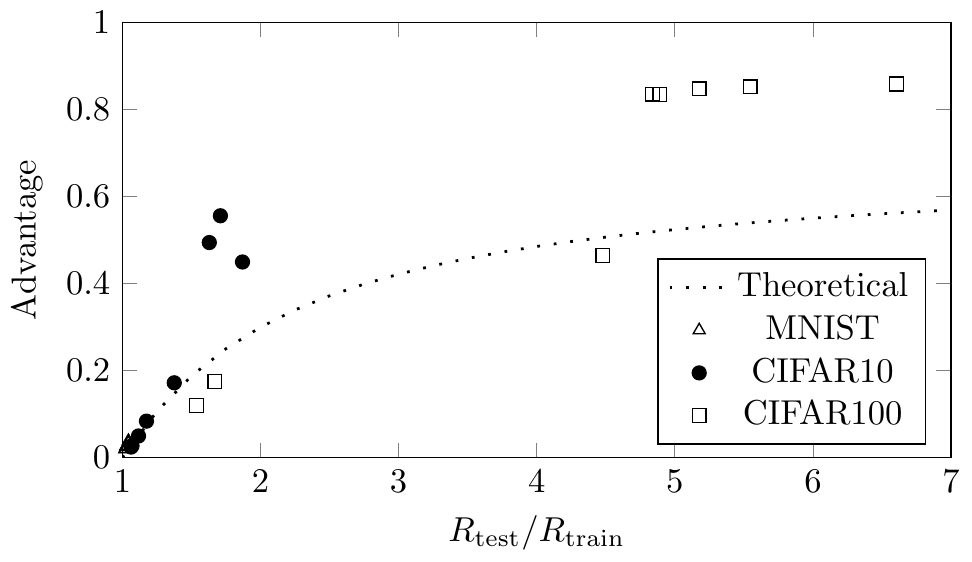}
	\caption{Deep CNNs assuming knowledge of average training loss $L_S$.}
	\label{fig:nnet-adv-by-gen}
	\end{subfigure}
	\caption{Empirical membership advantage of the threshold adversary (Adversary~\ref{adv:incthreshold}) given as a function of generalization ratio for regression, tree, and CNN models.}
	\label{fig:inclusion-exp}
\end{figure}

The results of the membership inference attacks on linear and tree models are plotted in Figures~\ref{fig:inclusion-exp-known} and \ref{fig:inclusion-exp-unknown}. The theoretical and experimental results appear to agree when the adversary knows both $\sS$ and $\sD$ and sets the decision boundary accordingly. However, when the adversary does not know $\sD$, it performs much better than what the theory predicts. In fact, an adversary can sometimes do better by just fixing the decision boundary at $|\epsilon| = \sS$ instead of taking $\sD$ into account.
\ifarxiv
This is because the training set error distributions are not exactly Gaussian. Figures~\ref{fig:linreg-error-dist} and \ref{fig:tree-error-dist} in the appendix show that, although the training set error distributions roughly match the shape of a Gaussian curve, they have a much higher peak at zero.
\else
This is because training set error distributions of overfitted models tend to have a higher peak at zero than a Gaussian.
\fi
As a result, it is often advantageous to bring the decision boundaries closer to zero.

The results of the threshold adversary on CNNs are given in Figure~\ref{fig:nnet-adv-by-gen}. Although these models perform classification, the loss function used for training is categorical cross-entropy, which is non-negative, continuous, and unbounded. This suggests that the threshold adversary could potentially work in this setting as well. Specifically, the predictions made by these models can be compared against $L_S$, the average training loss observed during training, which is often reported with published architectures as a point of comparison against prior work (see, for example, \cite{personal-photos} and \cite[Figures~3 and 4]{KrahenbuhlDDD15}). Figure~\ref{fig:nnet-adv-by-gen} shows that, while the empirical results do not match the theoretical curve as closely as do linear and tree models, they do not diverge as much as one might expect given that the error is not Gaussian as assumed by Theorem~\ref{thm:incthreshold}.

\begin{table}
	\begin{tabularx}{\columnwidth}{|>{\raggedleft}m{\widthof{complexity}}|X|X|}
		\hline
		& \multicolumn{1}{c|}{Our work} & \multicolumn{1}{c|}{Shokri et al.~\cite{ShokriSS17}} \\
		\hline
		Attack complexity & Makes only one query to the model & Must train hundreds of shadow models \\
		\hline
		Required knowledge & Average training loss $L_S$ & Ability to train shadow models, e.g., input distribution and type of model \\
		\hline
		Precision & 0.505 (MNIST) \newline 0.694 (CIFAR-10) \newline 0.874 (CIFAR-100) & 0.517 (MNIST) \newline 0.72-0.74 (CIFAR-10) \newline \textgreater~0.99 (CIFAR-100) \\
		\hline
		Recall & \textgreater~0.99 & \textgreater~0.99 \\
		\hline
	\end{tabularx}
	\caption{Comparison of our membership inference attack with that presented by Shokri et al. While our attack has slightly lower precision, it requires far less computational resources and background knowledge.}
	\label{tbl:shokricomparison}
\end{table}

Now we compare our attack with that by Shokri et al.~\cite{ShokriSS17}, which generates ``shadow models" that are intended to mimic the behavior of \model. Because their attack involves using machine learning to train the attacker with the shadow models, their attack requires considerable computational power and knowledge of the algorithm used to train the model. By contrast, our attacker simply makes one query to the model and needs to know only the average training loss. Despite these differences, when the size parameter $s$ is set equal to that used by Shokri et al., our attacker has the same recall and only slightly lower precision than their attacker. A more detailed comparison is given in Table~\ref{tbl:shokricomparison}.


\subsection{Attribute inference and reduction}
\label{sect:inv-red-eval}

\begin{figure}
	\centering
	\begin{subfigure}[b]{\subfigurewidth}
		\includegraphics[width=0.9\columnwidth]{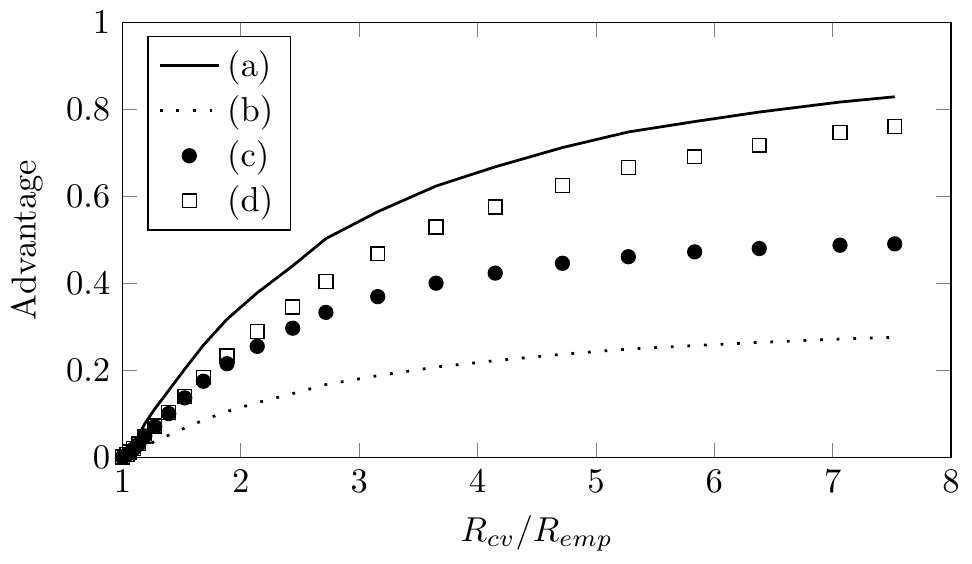}
    \caption{$t = \textrm{VKORC1}$}
    \label{fig:inv-red-vkorc1}
	\end{subfigure}
	\begin{subfigure}[b]{\subfigurewidth}
		\includegraphics[width=0.9\columnwidth]{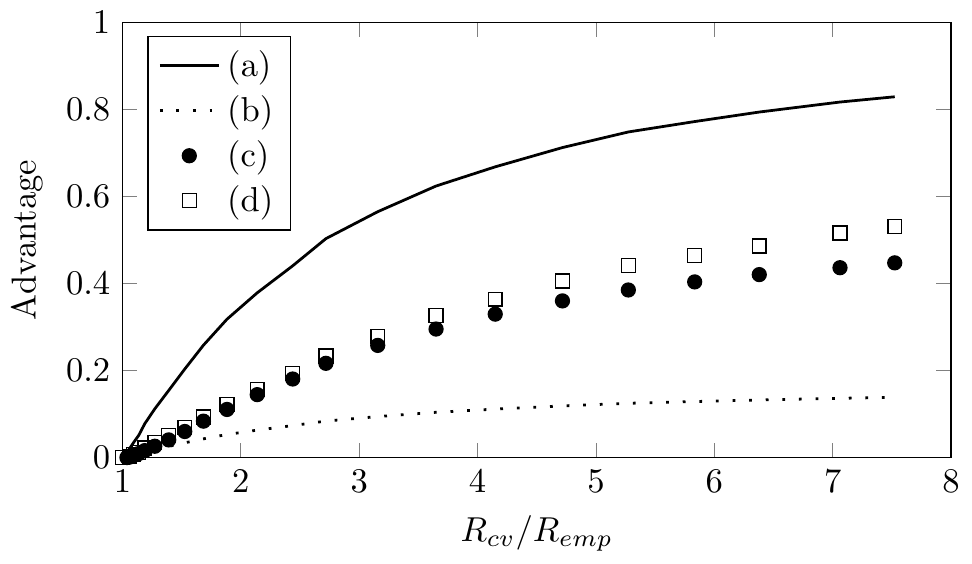}
    \caption{$t = \textrm{CYP2C9}$}
    \label{fig:inv-red-cyp2c9}
	\end{subfigure}
	\caption{Experimentally determined advantage for various membership and attribute adversaries. The plots correspond to: (a) threshold membership adversary (Adversary~\ref{adv:incthreshold}), (b) uniform reduction adversary (Adversary~\ref{adv:uniforminvtoinc}), (c) general attribute adversary (Adversary~\ref{adv:invlinreg}), and (d) multi-query reduction adversary (Adversary~\ref{adv:multiqinvtoinc}). Both reduction adversaries use the threshold membership adversary as the oracle, and $f_\A(\epsilon)$ for the attribute adversary is the Gaussian with mean zero and standard deviation \sS.}
	\label{fig:inv-red-exp}
\end{figure}

We now present the empirical attribute advantage of the general adversary (Adversary~\ref{adv:invlinreg}). Because this adversary uses the model inversion assumptions described at the beginning of Section~\ref{sect:inversion-attacks}, our evaluation is also in the setting of model inversion. For these experiments we used the IWPC and Netflix datasets described in Section~\ref{sect:methodology}. For $f_\A(\epsilon)$, the adversary's approximation of the error distribution, we used the Gaussian with mean zero and standard deviation $R_{emp}$. For the IWPC dataset, each of the genomic attributes (VKORC1 and CYP2C9) is separately used as the target $t$. In the Netflix dataset, the target attribute was whether a user rated a certain movie, and we randomly sampled targets from the set of available movies.

The circles in Figure~\ref{fig:inv-red-exp} show the result of inverting the VKORC1 and CYP2C9 attributes in the IWPC dataset. Although the attribute advantage is not as high as the membership advantage (solid line), the attribute adversary exhibits a sizable advantage that increases as the model overfits more and more. On the other hand, none of the attacks could effectively infer whether a user watched a certain movie in the Netflix dataset. In addition, we were unable to simultaneously control for both $\sD/\sS$ and $\tau$ in the Netflix dataset to measure the effect of influence as predicted by Theorem~\ref{thm:invbinary}.

Finally, we evaluate the performance of the multi-query reduction adversary (Adversary~\ref{adv:multiqinvtoinc}). As the squares in Figure~\ref{fig:inv-red-exp} show, with the IWPC data, making multiple queries to the membership oracle significantly increased the success rate compared to what we would expect from the naive uniform reduction adversary (Adversary~\ref{adv:uniforminvtoinc}, dotted line). Surprisingly, the reduction is also more effective than running the attribute inference attack directly. By contrast, with the Netflix data, the multi-query reduction adversary was often slightly worse than the naive uniform adversary although it still outperformed direct attribute inference.


\subsection{Collusion in membership inference}
\label{sect:collusion}
We evaluate \Traincol and \Advcol described in Section~\ref{sect:othersrc} for CNNs trained as image classifiers. To instantiate $F_K$ and $G_K$, we use Python's intrinsic pseudorandom number generator with key $K$ as the seed. We note that our proof of Theorem~\ref{thm:weak-stability} relies only on the uniformity of the pseudorandom numbers and not on their unpredictability. Deviations from this assumption will result in a less effective membership inference attack but do not invalidate our results. All experiments set the number of keys to $k=3$.

\begin{figure}
	\centering
	\begin{subfigure}[b]{\subfigurewidth}
	\centering
	\includegraphics[width=0.9\columnwidth]{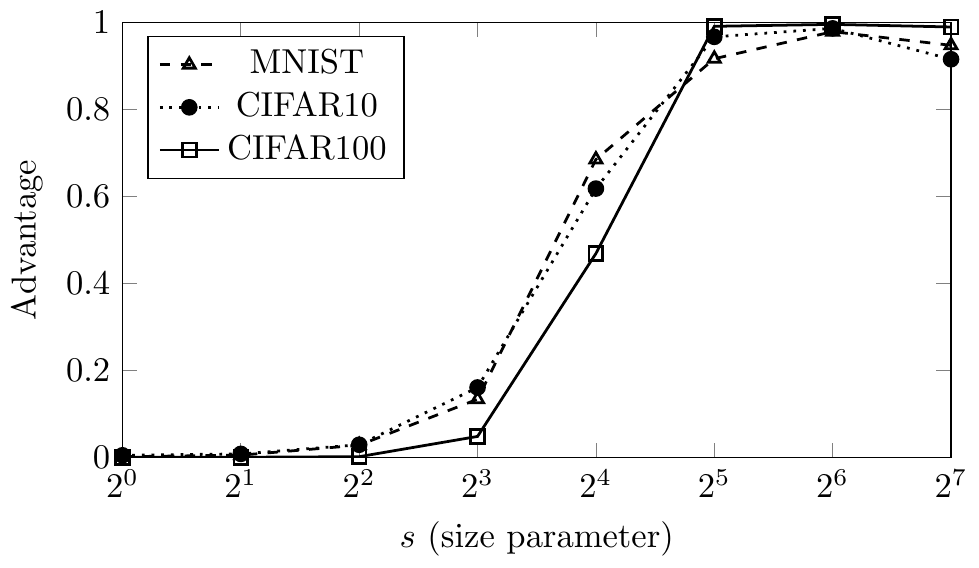}
	\caption{Advantage as a function of network size for \Advcol with $k=3$. For $s \ge 16$, CIFAR-10 and MNIST achieve advantage at least 0.9 (precision $\ge 0.9$, recall $\ge 0.99$), whereas CIFAR-100 achieves advantage 0.98 (precision $\ge 0.99$, recall $\ge 0.99$).}
	\label{fig:collusion-adv}
	\end{subfigure}
	\begin{subfigure}[b]{\subfigurewidth}
	\includegraphics[width=0.9\columnwidth]{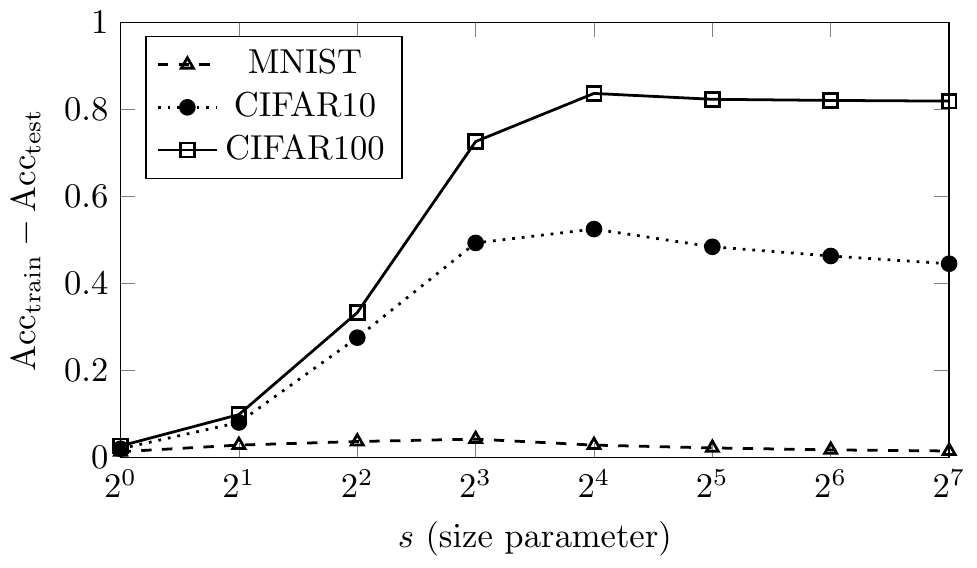}
	\centering
	\caption{Generalization error measured as the difference between training and test accuracy. On MNIST, the maximum was achieved at $s=8$ at 0.05, while for CIFAR-10 the maximum was 0.52 ($s=16$), and 0.82 ($s=16$) for CIFAR-100.}
	\label{fig:collusion-gen}
	\end{subfigure}
	\caption{Results of colluding training algorithm and membership adversary on CNNs trained on MNIST, CIFAR-10, and CIFAR-100. The size parameter was configured to take values $s=2^i$ for $i \in [0,7]$. Regardless of the models' generalization performance, when the network is sufficiently large, the attack achieves high advantage ($\ge 0.98$) without affecting predictive accuracy.}
\end{figure}

The results of our experiment are shown in Figures~\ref{fig:collusion-adv} and \ref{fig:collusion-gen}. The data shows that on all three instances, the colluding parties achieve a high membership advantage without significantly affecting model performance. The accuracy of the subverted model was only 0.014 (MNIST), 0.047 (CIFAR-10), and 0.031 (CIFAR-100) less than that of the unsubverted model. The advantage rapidly increases with the model size around $s \approx 16$ but is relatively constant elsewhere, indicating that model capacity beyond a certain point is a necessary factor in the attack.

Importantly, the results demonstrate that specific information about nearly all of the training data can be intentionally leaked through the behavior of a model that appears to generalize very well. In fact, looking at Figure~\ref{fig:collusion-gen} shows that in these instances, there is no discernible relationship between generalization error and membership advantage. The three datasets exhibit vastly different generalization behavior, with the MNIST models achieving almost no generalization error ($< 0.02$ for $s \ge 32$) and CIFAR-100 showing a large performance gap ($\ge 0.8$ for $s \ge 32$). Despite this fact, the membership adversary achieves nearly identical performance.


\section{Related Work}
\label{sect:related}


\subsection{Privacy and statistical summaries}
There is extensive prior literature on privacy attacks on statistical summaries. Komarova et al.~\cite{Komarova} looked into partial disclosure scenarios, where an adversary is given fixed statistical estimates from combined public and private sources and attempts to infer the sensitive feature of an individual referenced in those sources. A number of previous studies~\cite{wang-snp,homer08resolving,sankararaman2009genomic,ElEmam11,Gymrek321,shringarpure15} have looked into membership attacks from statistics commonly published in genome-wide association studies (GWAS). Calandrino et al.~\cite{Calandrino2011} showed that temporal changes in recommendations given by collaborative filtering methods can reveal the inputs that caused those changes. Linear reconstruction attacks~\cite{Dinur2003,Dwork2007,Smith2010} attempt to infer partial inputs to linear statistics and were later extended to non-linear statistics~\cite{linear-attacks}. While the goal of these attacks has commonalities with both membership inference and attribute inference, our results apply specifically to machine learning settings where generalization error and influence make our results relevant.

\subsection{Privacy and machine learning}
More recently, others have begun examining these attacks in the context of machine learning. Ateniese et al.~\cite{AtenieseFMSVV13} showed that the knowledge of the internal structure of Support Vector Machines and Hidden Markov Models leaks certain types of information about their training data, such as the language used in a speech dataset.

Dwork et al.~\cite{Dwork2015-2} showed that a differentially private algorithm with a suitably chosen parameter generalizes well with high probability. Subsequent work showed that similar results are true under related notions of privacy. In particular, Bassily et al.~\cite{Bassily2016} studied a notion of privacy called total variation stability and proved good generalization with respect to a bounded number of adaptively chosen low-sensitivity queries. Moreover, for data drawn from Gibbs distributions, Wang et al.~\cite{Wang2016KL} showed that on-average KL privacy is equivalent to generalization error as defined in this paper. While these results give evidence for the relationship between privacy and overfitting, we construct an attacker that directly leverages overfitting to gain advantage commensurate with the extent of the overfitting.

\subsubsection{Membership inference}
Shokri et al.~\cite{ShokriSS17} developed a membership inference attack and applied it to popular machine-learning-as-a-service APIs. Their attacks are based on ``shadow models'' that approximate the behavior of the model under attack. The shadow models are used to build another machine learning model called the ``attack model'', which is trained to distinguish points in the training data from other points based on the output they induce on the original model under attack. As we discussed in Section~\ref{sect:inclusion-eval}, our simple threshold adversary comes surprisingly close to the accuracy of their attack, especially given the differences in complexity and requisite adversarial assumptions between the attacks.

Because the attack proposed by Shokri et al.\ itself relies on machine learning to find a function that separates training and non-training points, it is not immediately clear why the attack works, but the authors hypothesize that it is related to overfitting and the ``diversity'' of the training data. They graph the generalization error against the precision of their attack and find some evidence of a relationship, but they also find that the relationship is not perfect and conclude that model structure must also be relevant. The results presented in this paper make the connection to overfitting precise in many settings, and the colluding training algorithm we give in Section~\ref{sect:collusion} demonstrates exactly how model structure can be exploited to create a membership inference vulnerability.

Li et al.~\cite{Li2013} explored membership inference, distinguishing between ``positive'' and ``negative'' membership privacy. They show how this framework defines a family of related privacy definitions that are parametrized on distributions of the adversary's prior knowledge, and they find that a number of previous definitions can be instantiated in this way.

\subsubsection{Attribute inference}
Practical model inversion attacks have been studied in the context of linear regression~\cite{fredrikson2014privacy,dptheory2016}, decision trees~\cite{mi2015}, and neural networks~\cite{mi2015}. Our results apply to these attacks when they are applied to data that matches the distributional assumptions made in our analysis. An important distinction between the way inversion attacks were considered in prior work and how we treat them here is the notion of advantage. Prior work on these attacks defined advantage as the difference between the attacker's predictive accuracy given the model and the best accuracy that could be achieved without the model. Although some prior work~\cite{fredrikson2014privacy,mi2015} empirically measured this advantage on both training and test datasets, this definition does not allow a formal characterization of how exposed the \emph{training data specifically} is to privacy risk. In Section~\ref{sect:inversion}, we define attribute advantage precisely to capture the risk to the training data by measuring the difference in the attacker's accuracy on training and test data: the advantage is zero when the attack is as powerful on the general population as on the training data and is maximized when the attack works \emph{only} on the training data.

Wu et al.~\cite{mitheory2016} formalized model inversion for a simplified class of models that consist of Boolean functions and explored the initial connections between influence and advantage. However, as in other prior work on model inversion, the type of advantage that they consider says nothing about what the model specifically leaks about its training data. Drawing on their observation that influence is relevant to privacy risk in general, we illustrate its effect on the notion of advantage defined in this paper and show how it interacts with generalization error.


\section{Conclusion and Future Directions}
\label{sect:conclusion}

We introduced new formal definitions of advantage for membership and attribute inference attacks. Using these definitions, we analyzed attacks under various assumptions on learning algorithms and model properties, and we showed that these two attacks are closely related through reductions in both directions.  Both theoretical and experimental results confirm that models become more vulnerable to both types of attacks as they overfit more. Interestingly, our analysis also shows that overfitting is not the only factor that can lead to privacy risk:  Theorem~\ref{thm:weak-stability} shows that even stable learning algorithms, which provably do not overfit, can leak precise membership information, and the results in Section~\ref{sect:inversion-attacks} demonstrate that the influence of the target attribute on a model's output plays a key role in attribute inference. 

Our formalization and analysis open interesting directions for future work. The membership attack in Theorem~\ref{thm:weak-stability} is based on a colluding pair of adversary and learning rule, \Traincol and \Advcol.  This could be implemented, for example, by a malicious ML algorithm provided by a third-party library or cloud service to subvert users' privacy. Further study of this scenario, which may best be formalized in the framework of algorithm substitution attacks~\cite{BPR14}, is warranted to determine whether malicious algorithms can produce models that are indistinguishable from normal ones and how such attacks can be mitigated.

Our results in Section~\ref{sect:dpbound} give bounds on membership advantage when certain conditions are met. These bounds apply to adversaries who may target specific individuals, bringing arbitrary background knowledge of their targets to help determine their membership status. Some types of realistic adversaries may be motivated by concerns that incentivize learning a limited set of facts about as many individuals in the training data as possible rather than obtaining unique background knowledge about specific individuals. Characterizing these ``stable adversaries'' is an interesting direction that may lead to tighter bounds on advantage or relaxed conditions on the learning rule.

\bibliographystyle{ieeetr}
\bibliography{biblio}

\newpage

\appendix

\begin{figure}
	\begin{center}
	\begin{subfigure}[b]{\subfigurewidth}
		\includegraphics[width=0.9\columnwidth]{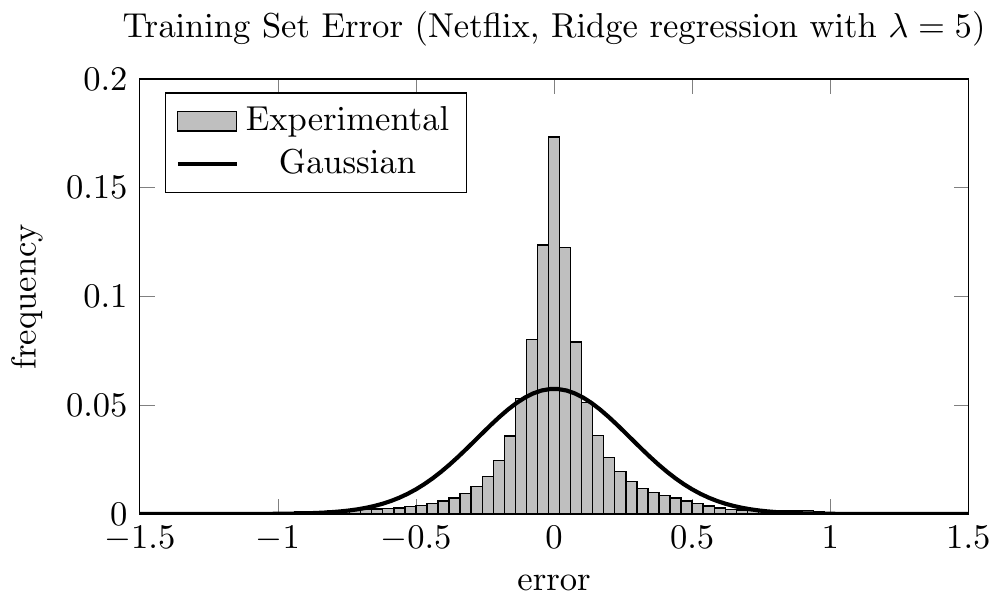}
	\end{subfigure}
	\begin{subfigure}[b]{\subfigurewidth}
		\includegraphics[width=0.9\columnwidth]{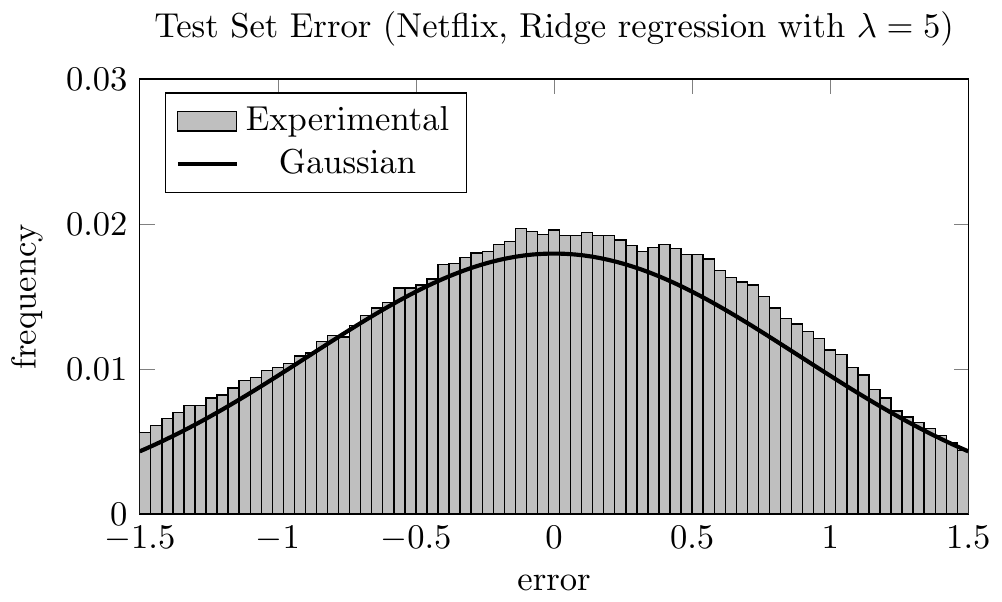}
	\end{subfigure}
	\caption{The training and test error distributions for an overfitted Ridge regression model. The histograms are juxtaposed with what we would expect if the errors were normally distributed with standard deviation $R_{emp} = 0.2774$ and $R_{cv} = 0.8884$, respectively. Note the different vertical scale for the two graphs. To minimize the effect of noise, the errors were measured using 1000 different random 75-25 splits of the data into training and test sets and then aggregated.}
	\label{fig:linreg-error-dist}
	\end{center}
	\begin{center}
	\begin{subfigure}[b]{\subfigurewidth}
		\includegraphics[width=0.9\columnwidth]{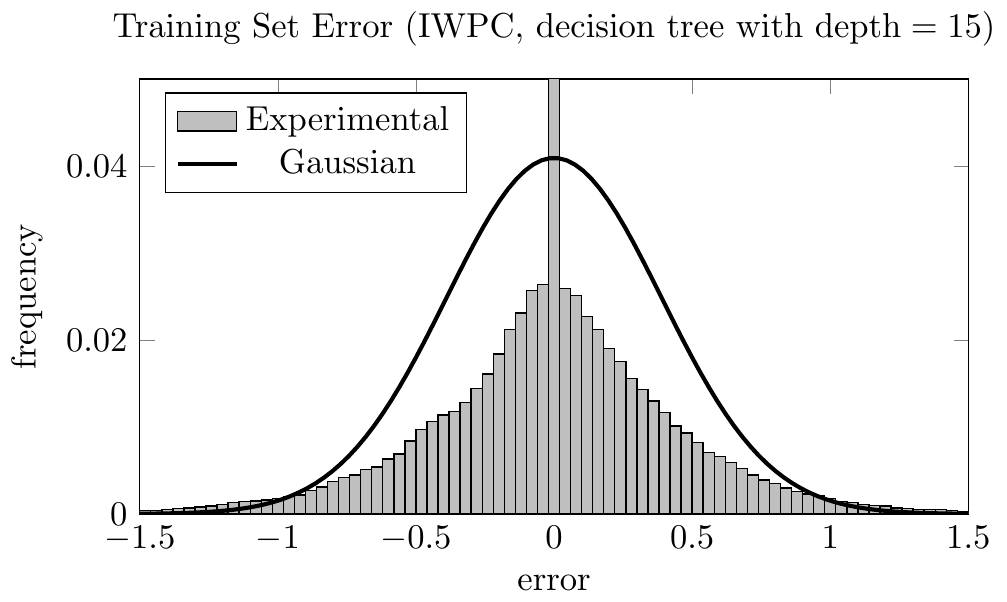}
	\end{subfigure}
	\begin{subfigure}[b]{\subfigurewidth}
		\includegraphics[width=0.9\columnwidth]{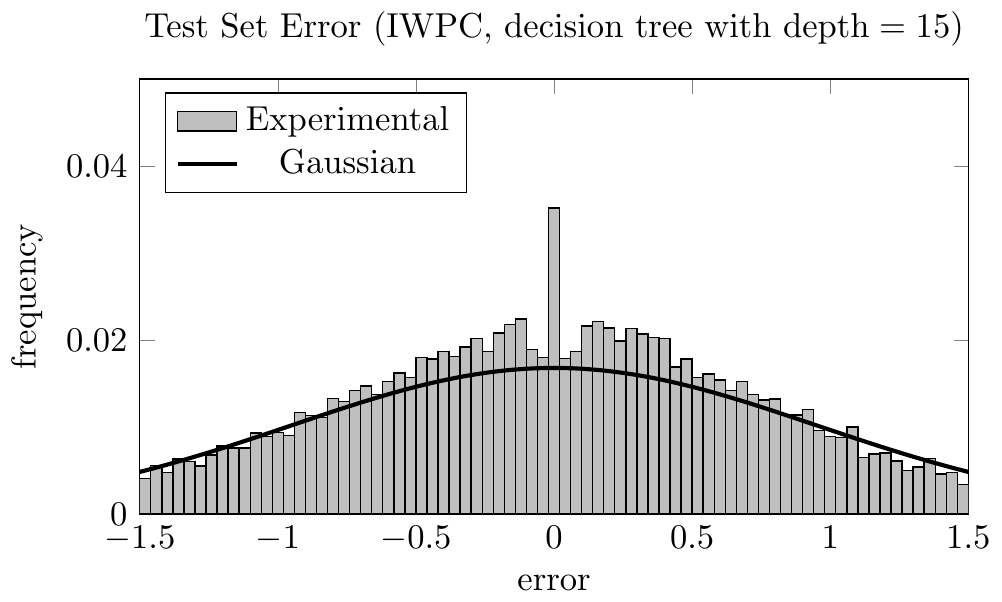}
	\end{subfigure}
	\caption{The training and test error distributions for an overfitted decision tree. The histograms are juxtaposed with what we would expect if the errors were normally distributed with standard deviation $R_{emp} = 0.3899$ and $R_{cv} = 0.9507$, respectively. The bar at $\mathrm{error} = 0$ does not fit inside the first graph; in order to fit it, the graph would have to be almost 10 times as high. To minimize the effect of noise, the errors were measured using 1000 different random 75-25 splits of the data into training and test sets and then aggregated.}
	\label{fig:tree-error-dist}
	\end{center}
\end{figure}

\end{document}